\documentclass[12pt,onepage,oneside]{article}
\usepackage{fullpage}
\usepackage{graphicx}
\usepackage{epstopdf}
\usepackage{etex}
\usepackage{tocloft}
\usepackage{booktabs}
\usepackage{natbib}
\usepackage{fancyhdr}
\usepackage{hyperref}
\usepackage{amsthm}
\usepackage{amsmath}
\usepackage{cleveref}
\usepackage{bbm}
\usepackage{multirow}
\usepackage{lscape}
\usepackage{bbm}
\usepackage{bm}
\usepackage{color}
\usepackage{setspace,caption}
\usepackage{subcaption}
\usepackage{ifthen}
\usepackage[latin1]{inputenc}
\usepackage{titletoc}
\usepackage{eso-pic}
\usepackage{float}

\usepackage{amssymb}
\usepackage{xcolor}

\DeclareGraphicsRule{.tif}{png}{.png}{`convert #1 `dirname #1`/`basename #1 .tif`.png}

\newtheorem{assumption}{Assumption}
\newtheorem{prop}{Proposition}

\def \R {\mathbb  R}


\def \E {\mathbb  E} 

\usepackage{color,soul}
\definecolor{lightblue}{rgb}{.80,.95,1}
\sethlcolor{lightblue}
\setulcolor{red}

\usepackage[margin=1in]{geometry}

\usepackage{times}

\begin{document}

\title{The Dynamic Persistence of Economic Shocks}

\author{%
Jozef {\sc Barun\'{i}k}$^{\rm a,b}$\thanks{Corresponding author, Tel. +420 (776) 259 273, Email address: barunik@fsv.cuni.cz}, and
Luk\'{a}\v{s} {\sc V\'{a}cha}$^{\rm b,a}$
\vspace{5mm} \\
 \small $^{\rm a}$ Institute of Economic Studies, Charles University, \vspace{-0.5mm}\\  \
 \small Opletalova 26, 110 00, Prague, Czech Republic \vspace{3mm} \\
 \small $^{\rm b}$ The Czech Academy of Sciences, Institute of Information Theory and Automation \vspace{-0.5mm}\\
  \small Pod Vodarenskou Vezi 4, 182 00, Prague, Czech Republic}

\maketitle
\begin{abstract}
{\normalsize We propose a novel framework for modeling time-varying persistence in economic time series, allowing for smoothly evolving heterogeneity in shock dynamics. We leverage localized regression techniques to flexibly identify changes in persistence over time, offering a data-driven alternative to traditional parametric models. We applied this methodology to U.S. inflation and stock market volatility data and found substantial persistence variations that align with key macroeconomic events and market conditions. The results reveal previously undetected pockets of predictability and provide significant increases in out-of-sample forecast accuracy. These findings have important implications for economic modeling, forecasting, and policy analysis.}

\noindent \textbf{Keywords}: persistence, heterogeneous shocks, pockets of predictability, local stationarity, time-varying parameters \\
\noindent \textbf{JEL}:  C14, C18, C22, C50
\end{abstract}

\bigskip

\noindent {\normalsize \textbf{Acknowledgments}: We are grateful to the editor, Peter Hull, and three anonymous reviewers for their useful comments and suggestions, which greatly improved the paper. We also thank Rainer von Sachs, Federico Severino, Roman Liesenfeld, Nikolas Hautsch, Christian Hafner, Wolfgang H\"{a}rdle, and Lubos Hanus for valuable discussions and comments. We appreciate the insightful comments from numerous seminar presentations, such as the Recent Advances in Econometrics (2023, Louvain), the 2021 and 2022 STAT of ML conference in Prague, and the 15${\text{th}}$ International Conference on Computational and Financial Econometrics. The support of the Czech Science Foundation within the project 24-11555S is gratefully acknowledged. To estimate the proposed quantities, we provide the \texttt{tvPersistence.jl} package in \textsf{JULIA} with the excellent research assistance of Ji\v{r}\'i Mikulenka. The package is available at \url{https://github.com/barunik/tvPersistence.jl}.\break
\noindent \textbf{Disclosure Statement:} Jozef Barun\'{i}k and Lukas Vacha have nothing to disclose.}


\newpage
\section{Introduction}

Macroeconomic and financial variables have varied considerably over the past decades \citep{primiceri2005time,justiniano2008time,bekierman2018forecasting,chen2018nonparametric}, as both stable and uncertain periods associated with different economic states have been driven by different shocks. Moreover, many authors argue that these variables are driven by shocks that influence their future value with heterogeneous degrees of persistence \citep{dew2013asset,bandi2021,bandi2022spectral,bandi2017business}.\footnote{We use the term persistence to capture a property of a time series that is closely related to its autocorrelation structure. In particular, the degree of persistence can be used to accurately describe how a shock affects the time series. A low degree of persistence indicates shocks of a transitory nature that force the time series to return to its mean path. In contrast, when shocks push the time series away from the mean path, they are said to be highly persistent. A shock tends to persist for a long time.} A possibly nonlinear combination of transitory and persistent responses to shocks results in time series with heterogeneous persistent structures that remain hidden to the observer using traditional methods. Each of these challenges severely limits progress in modeling and forecasting economic data, as researchers work with models that aggregate one of these features.

Identifying and modeling such heterogeneity is crucial. Persistence directly informs our understanding of economic dynamics and has material implications for forecasting, structural analysis, and policy design. However, despite extensive work on unit roots \citep{evans1981testing,nelson1982trends,perron1991continuous}, structural breaks \citep{perron1989great}, and more complicated long memory or fractionally integrated structures that exhibit considerable persistence over time without being nonstationary \citep{hassler1995long,baillie1996analysing}, there remains no consensus on how best to identify or model the time-varying nature of persistence. In existing approaches, strong parametric assumptions are often imposed; moreover, such methods fail to account for smooth shifts in dependent structures, limiting both interpretability and forecasting performance.

To identify and explore local persistent structures that are useful for modeling and forecasting, we propose a novel representation for nonstationary time series data, providing a precise characterization of how shocks of differing durations contribute to the overall dynamics of a time series at each point in time. In addition, we propose a novel forecasting approach using time-varying coefficient regressions with heterogeneously persistent components, which can be used to identify ``pockets of predictability'' that occur owing to shocks in the time series, which are characterized by a local heterogeneously persistent nature. By providing a flexible, interpretable, and empirically tractable model of time-varying persistence, this paper contributes to the literature on time series decomposition, economic forecasting, and the evolving structure of macro-financial shocks. Our findings suggest that accommodating heterogeneous and shifting persistent structures is essential to capture dynamics that are often missed by traditional models and to exploit information that is otherwise hidden within the data.

Our work is closely related to recent works in the literature that suggested representing covariance stationary time series as linear combinations of orthogonal components carrying information about heterogeneous cycles of various lengths \citep{bandi2019,ortu2020}. While these methods are particularly well suited to studying heterogeneously persistent structures in time series data, the stable and uncertain times associated with different economic states imply that responses to shocks vary over time, and these shocks are difficult to identify under the assumption of stationary data. 
Thus, the localization of persistent structures is a new approach to modeling and forecasting. A model that allows persistent structures to change smoothly over time is essential since it is unrealistic to assume that the stochastic future of a time series is stable in the long run. Furthermore, in numerous cases, we observe nonstationary behavior in time series data even within shorter periods.

Economic variables inherently have different degrees of persistence, which can be reconciled with agents' preferences, which differ according to their horizons of interest. 
Economic theory suggests that the marginal utility of agents' preferences depends on the cyclical components of consumption \citep{giglio2015very,bandi2017business}. Furthermore, the frequency-specific preferences of investors \citep{dew2013asset,neuhierl2021frequency,bandi2021} have been documented in the literature, and these preferences have been related to investment horizons according to their risk attitudes \citep{dew2013asset}. 
Such behavior can be observed, for example, under myopic loss aversion, where an agent's decision depends on the valuation horizon. Unexpected shocks or news can alter such preferences, which may result in the generation of transitory and persistent fluctuations of different magnitudes.\footnote{For example, a shock that affects longer horizons may reflect permanent changes in expectations about future price movements. Such a shock may lead to a permanent change in a firm's future dividend payments \citep{balke2002low}. Conversely, a shock that affects shorter horizons may indicate temporary changes in future price movements. For example, suppose that a shock is simply a change in an upcoming dividend payment. This would likely lead to a very short-term change, reflecting the transitory nature of the news.} Importantly, few economic relationships remain constant over decades, years or even months, and the evolution of the economy with unprecedented declines in economic activity results in the generation of very different persistent structures.

To identify time-varying transitory and persistent components of time series data, we propose a time-varying extended Wold decomposition that can be applied to localized heterogeneous persistent structures. We assume that a small neighborhood around a fixed time point remains stationary and allow the coefficients to vary over time according to locally stationary processes \citep{dahlhaus2009}. 
We relax the stationarity assumption of \cite{bandi2019,ortu2020}, who proposed an extended Wold decomposition for stationary processes, and formalize the idea that a time series can be represented by time-varying persistent structures. By preserving the properties of the extended Wold decomposition at each localized point in time, our extension allows various types of locally stationary processes to be decomposed, as the localized components remain uncorrelated.

Our decomposition can be used to identify the duration of the fluctuation that is most relevant to the variability of the time series at a given point in time. Moreover, we elucidate potential drivers of predictability pockets, so our results are largely relevant to the forecasting literature. As noted in \citep{stock2017}, advancements in the forecasting accuracy of economic time series have been limited over the past few decades. First, information is hidden owing to noise and is unevenly distributed over different horizons. Second, economic time series are dynamic and very often nonstationary when modelled over long periods. The approach proposed in this paper can accurately extract the relevant information, enabling the construction of more accurate forecasting models. To our knowledge, we are the first to study the time-varying degree of persistence in time series and identify the pockets of predictability that occur owing to localized heterogeneous persistent structures in time series data, which may represent a new direction for forecasting.\footnote{\cite{lima2007shocks} Note that local or temporary persistent structures exist in many time series.}

The identification of time-varying persistent structures has several advantages over traditional methods based on Wold decomposition. Traditional Wold decomposition, which underlies the vast majority of contemporaneous models, provides aggregate information about the speed, horizon and intensity of persistent shocks. However, these represent coarse descriptions that are insufficient to precisely identify persistent structures in a given period. To capture the heterogeneity of persistent structures, the duration (propagation) of shocks at different levels of persistence and at different points in time must be considered.

To illustrate the broad applicability of our method, we consider two very different datasets that are important to economists: inflation and stock volatility. These applications serve two main purposes. First, we identify a strong persistent structure in the data, which is both highly heterogeneous and time-varying. Second, we illustrate how our model can be used for modeling and forecasting. Both datasets exhibit typical persistence features and are crucial to the field. 

While the properties of aggregate inflation are ultimately of interest to policymakers, the characteristics and determinants of the behavioral mechanisms underlying price setting are important factors that influence inflation over time. The persistence of inflation has direct implications for monetary policy. Similarly, stock market volatility represents a key measure of risk and uncertainty. We show that even in periods of very high persistence, we can identify less persistent subperiods in which transient shocks prevail. Our model, which can accurately identify such dynamics within time-varying persistent structures, is useful for identifying the dynamics driving the data, leading to improved forecasts.


\section{Time Variation of Time Series Components with Different Levels of Persistence}
\label{sec:theory}

The most fundamental justification for time series analysis is Wold's decomposition theorem. According to this theorem, any covariance stationary time series can be represented as a linear combination of its own past shocks \citep{wold1938, hamilton2020time}. This is a critical theorem in the economic literature that is useful to macroeconomists studying impulse response functions and central to tracing the mechanisms underlying economic shocks to improve policy analysis.

However, this is only one possible representation of a time series and is particularly suitable for cases in which we can assume the stationarity of the model. If we cannot assume that the stochastic properties of the data are stable over time, the stationarity assumption may be misleading. Importantly, other representations may capture deeper properties of the time series, which may also change smoothly over time. As argued in the Introduction, we want to explore the properties of time variation and the properties related to different levels of persistence in the time series.

The latter analysis can be performed via persistence-based Wold decomposition \citep{bandi2019,ortu2020}, which enables the decomposition of (multivariate) stationary time series into the sum of orthogonal components associated with their own levels of persistence. These individual components have Wold representations that are defined with respect to scale-specific shocks with heterogeneous persistence. Here, we aim to provide a persistence-based representation for locally stationary processes \citep{dahlhaus1996} and discuss how to decompose locally stationary processes into independent components with different degrees of persistence. With the proposed model, we can study the time variation of components with different degrees of persistence.

\subsection{Locally Stationary Processes}

While stationarity has played an important role in time series analysis for decades because of the availability of natural linear Gaussian modeling frameworks, many economic relationships are not stationary over longer periods. For example, the state of the economy and agent behaviors are often highly dynamic, and the assumption of time-invariant mechanisms generating the data is often unrealistic.\footnote{An exception is nonstationary models in which persistence is generated via integrated or cointegrated processes.} 
A more general nonstationary process may be one that is close to a stationary process at time points in local regions but whose properties (covariances, parameters, etc.) gradually change in a nonspecific way over time. 
The idea that such processes may be stationary only for limited periods and that such processes can still be estimated is not new. These so-called locally stationary processes were introduced in \cite{dahlhaus1996}.

More formally, assume that an economic variable of interest follows a nonstationary process $x_t$ depending on a time-varying parameter model. In this framework, we replace $x_t$ with a triangular array of observations $(x_{t,T};t=1,\ldots,T)$, where $T$ is the sample size, and we assume that we observe $x_{t,T}$ at times $t=1,\ldots,T$. Such a nonstationary process $x_{t,T}$ can be locally approximated \citep{dahlhaus1996} around each rescaled and fixed time $u \approx t/T$ such that $u\in[0,1]$ for a stationary process $x_t(u)$. In other words, under some suitable regularity conditions, $\left|x_{t,T}-x_t(u)\right| = \mathcal{O}_p \left( |\frac{t}{T}-u|+\frac{1}{T} \right)$. While stationary approximations vary smoothly over time as $u \mapsto x_t(u)$, locally stationary processes can be interpreted as processes whose (approximately) stationary properties change smoothly over time. The main properties of $x_{t,T}$ are thus encoded in the stationary approximation; hence, in the estimation process, we focus on the quantities $\mathbb{E}\left[ g(x_t(u),x_{t-1}(u),\ldots) \right]$ with some function $g(. )$ as natural approximations of $\mathbb{E}\left[ g(x_{t,T},x_{t-1,T},\ldots) \right]$.

Crucially, a linear locally stationary process $x_{t,T}$ can be represented by a time-varying MA($\infty$):
\begin{equation}
\label{eq:wold0}
x_{t,T} = \sum_{h=-\infty}^{+\infty} \alpha_{t,T} \left(h\right) \epsilon_{t-h},
\end{equation}
where the coefficients $\alpha_{t,T} \left(h\right) $ can be approximated under certain (smoothness) assumptions (see Assumption \ref{propLSP} in Appendix \ref{sec:app}) with coefficient functions $\alpha_{t,T} \left(h\right) \approx \alpha\left(t/T,h\right)$. In addition, $\epsilon_t$ are independent random variables with $\E\epsilon_t = 0$, $\E\epsilon_s\epsilon_t=0$ for $s\ne t$, and $\E|\epsilon_t| < \infty$. The construction with $\alpha_{t,T} \left(h\right)$ and $\alpha\left(t/T,h\right)$ appears complicated, but a function $\alpha\left(u,h\right)$ is needed for rescaling and to impose the smoothness condition, whereas the additional use of $\alpha_{t,T} \left(h\right)$ makes the class rich enough to cover autoregressive models (see Theorem 2.3. in \cite{dahlhaus1996}), as discussed later.

It is straightforward to construct a stationary approximation (with existing derivative processes) as follows:
\begin{equation}
\label{eq:wold}
x_t(u) = \sum_{h=-\infty}^{+\infty} \alpha\left(u,h\right) \epsilon_{t-h},
\end{equation}
where at each fixed time $u$, the original process $x_{t,T}$ can be represented as a linear combination of uncorrelated components with time-varying impulse response functions $\alpha\left(u,h\right)$. 
Note that thus far, the process is assumed to have a zero mean $\mu(t/T)$. While this may be unrealistic in numerous applications, we return to this assumption later in the estimation.

\subsection{Time-Varying Extended Wold Decomposition}

Having a representation that allows for time variation in the impulse response function, we further introduce a localized persistent structure. We use the extended Wold decomposition (EWD) of \cite{bandi2019,ortu2020}, which allows us to decompose the time series into several components with different levels of persistence. The decomposition has considerable advantages in helping us understand the persistence-related dynamics of economic time series data and improves forecasting performance, as many economic time series exhibit heterogeneous persistent structures (across horizons and scales).

Importantly, we argue that in addition to recovering the heterogeneous persistent structures of typical economic time series, we need to localize such structures. Localization and persistence decomposition can dramatically increase our understanding of dynamic economic behavior by allowing persistent structures to change smoothly over time. In turn, models built with such an understanding may have substantially improved forecasting performance, as discussed later with empirical examples.

In particular, we propose a model in which locally stationary processes are used to capture the dynamics of heterogeneous persistence. As we can express locally stationary processes using the time varying MA($\infty$) representation, we can adapt the EWD under alternative assumptions and localize the decomposition results. By inheriting the properties of the EWD at each localized time point, our extension provides decompositions of any locally stationary processes according to the assumption \ref{propLSP} in Appendix \ref{sec:app}, as the localized components remain uncorrelated.

The proposition \ref{prop} formalizes the main result and presents the time-varying EWD model.

\begin{prop}[Time-Varying Extended Wold Decomposition (TV-EWD)]
\label{prop}
If $x_{t,T}$ is a zero-mean, locally stationary process in the sense of Assumption \ref{propLSP} in Appendix \ref{sec:app} that has a representation $x_{t,T} = \sum_{h=-\infty}^{+\infty} \alpha_{t,T}(h) \epsilon_{t-h}$, then it can be decomposed as
\begin{equation}
x_{t,T}=\sum_{j=1}^{+\infty} \sum_{k=0}^{+\infty} \beta_{t,T}^{\{j\}}(k) \epsilon_{t-k2^j}^{\{j\}},
\end{equation}
where for any $j \in \mathbb{N}, k \in \mathbb{N}$
\begin{equation}
\beta_{t,T}^{\{j\}}(k)= \frac{1}{\sqrt{2^j}}  \left[ \sum_{i=0}^{2^{j-1}-1} \alpha_{t,T}(k2^j+i) - \sum_{i=0}^{2^{j-1}-1} \alpha_{t,T}\left(k2^j+2^{j-1}+i\right) \right],
\end{equation}
\begin{equation}
\epsilon_t^{\{j\}} = \frac{1}{\sqrt{2^j}}  \left( \sum_{i=0}^{2^{j-1}-1} \epsilon_{t-i} - \sum_{i=0}^{2^{j-1}-1} \epsilon_{t-2^{j-1}-i} \right),
\end{equation}
where the coefficients $\beta_{t,T}^{\{j\}}(k)$ can be approximated according to Assumption \ref{propLSP_beta} in Appendix \ref{sec:app} with coefficient functions $\beta_{t,T}^{\{j\}}(k) \approx \beta^{\{j\}}\left(t/T,k\right)$. Here, $\epsilon_t$ are independent random variables with $\E\epsilon_t = 0$, $\E\epsilon_s\epsilon_t=0$ for $s\ne t$, $\E|\epsilon_t| < \infty$, and $\sum_{k=0}^{\infty}\left(\beta_{t,T}^{\{j\}}(k)\right)^2 < \infty$ for all $j$.
\end{prop}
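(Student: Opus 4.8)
The plan is to prove the identity for each fixed pair $(t,T)$ by freezing the moving-average coefficients $\alpha_{t,T}(h)$ as a fixed square-summable sequence, so that the claimed expansion becomes the discrete Haar multiresolution (extended Wold) decomposition of \cite{ortu2020} applied to these frozen coefficients. Randomness enters only through the common innovations $\epsilon_{t-h}$, which are shared by every scale, so the statement reduces to an orthogonal-expansion identity in the Hilbert space $\mathcal{H}_t=\overline{\mathrm{span}}\{\epsilon_{t-h}\}\subset L^2(\Omega)$ with inner product $\langle X,Y\rangle=\E[XY]$. Since the coefficients $\beta_{t,T}^{\{j\}}(k)$ only involve lags $k2^j+i\ge 0$, the working index set is the causal lag half-line $h\ge 0$, on which the dyadic Haar system is organised.

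First I would show that the scale-and-shift shocks $\{\epsilon_{t-k2^j}^{\{j\}}:j\ge 1,\,k\ge 0\}$ form an orthogonal system in $\mathcal{H}_t$. Writing $\epsilon_{t-k2^j}^{\{j\}}=\sum_{h\ge 0}\psi_{j,k}(h)\,\epsilon_{t-h}$ with $\psi_{j,k}(h)=2^{-j/2}\bigl[\mathbbm{1}(k2^j\le h<k2^j+2^{j-1})-\mathbbm{1}(k2^j+2^{j-1}\le h<(k+1)2^j)\bigr]$, the $\psi_{j,k}$ are exactly discrete Haar wavelets on the lag axis. Using $\E\epsilon_s\epsilon_t=\sigma^2\delta_{st}$, the normalisation $2^{-j/2}$ yields $\E[(\epsilon_{t-k2^j}^{\{j\}})^2]=\sigma^2$, while orthogonality follows from disjointness of the length-$2^j$ supports at fixed $j$ and distinct $k$, and from the vanishing Haar sum across scales $j\ne j'$, where the coarser wavelet is constant over the support of the finer one.

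Next I would identify $\beta_{t,T}^{\{j\}}(k)$ as the expansion coefficients. Computing $\langle x_{t,T},\epsilon_{t-k2^j}^{\{j\}}\rangle=\sigma^2\sum_{h\ge 0}\alpha_{t,T}(h)\psi_{j,k}(h)$ and collecting the two index blocks reproduces $\sigma^2\beta_{t,T}^{\{j\}}(k)$, so after normalising $\sigma^2=1$ the $\beta$'s are precisely the Fourier coefficients of $x_{t,T}$ with respect to the orthonormal Haar system. The decomposition then follows from completeness: the span of $\{\psi_{j,k}\}$ is dense in $\ell^2(\mathbb{N}_0)$, equivalently the coarse-scale block-average projection onto the scaling-function space $V_J$ satisfies $\|P_J\alpha_{t,T}\|\to 0$ as $J\to\infty$ because $\bigcap_J V_J=\{0\}$ for square-summable sequences. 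Hence there is no residual term, Parseval gives $x_{t,T}=\sum_{j,k}\beta_{t,T}^{\{j\}}(k)\,\epsilon_{t-k2^j}^{\{j\}}$, and $\sum_k(\beta_{t,T}^{\{j\}}(k))^2\le\|\alpha_{t,T}\|^2<\infty$ for every $j$ by Bessel's inequality.

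I expect the main obstacle to be the completeness step on the discrete half-line: one must verify that the Haar wavelets alone, with no added scaling-function term, exhaust the causal innovation subspace, which is exactly where square-summability of $\alpha_{t,T}(\cdot)$ is used to kill the coarse-scale remainder. A second, more conceptual point is that the identity is exact for each frozen $(t,T)$ precisely because only the deterministic coefficients are held fixed while the stochastic skeleton $\epsilon_{t-h}$ is common across scales, so the whole time variation is carried by the coefficients. Finally, the approximation $\beta_{t,T}^{\{j\}}(k)\approx\beta^{\{j\}}(t/T,k)$ is inherited from $\alpha_{t,T}(h)\approx\alpha(t/T,h)$ of Assumption \ref{propLSP}: since each $\beta_{t,T}^{\{j\}}(k)$ is a fixed finite linear combination of the values $\alpha_{t,T}(\cdot)$, the smoothness in the rescaled-time argument demanded by Assumption \ref{propLSP_beta} transfers from $\alpha(u,\cdot)$ to $\beta^{\{j\}}(u,\cdot)$ through this linearity.
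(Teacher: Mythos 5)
Your proof is correct and takes essentially the same route as the paper: the paper's proof is a one-line appeal to the properties of locally stationary processes (Dahlhaus, 1996) and the extended Wold decomposition (Ortu et al., 2020), and your frozen-coefficient argument --- Haar orthogonality in the innovation Hilbert space, identification of the $\beta_{t,T}^{\{j\}}(k)$ as Haar--Fourier coefficients, completeness via $\bigcap_J V_J = \{0\}$, and the linearity-based transfer of the smoothness conditions from $\alpha$ to $\beta^{\{j\}}$ --- is precisely the content of that appeal, spelled out. The one caveat (shared by the proposition itself, which writes a two-sided MA($\infty$)) is that the Haar system on $h \ge 0$ can only reconstruct the causal part of the process, so the exact identity requires $\alpha_{t,T}(h) = 0$ for $h < 0$, i.e.\ the causal Wold setting of Ortu et al., which your restriction to the lag half-line assumes implicitly.
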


\begin{proof}
This follows directly from the properties of locally stationary processes \citep{dahlhaus1996} and the EWD \citep{bandi2019,ortu2020}.
\end{proof}

Proposition \ref{prop} formalizes the discussion about the representation of a time series that provides a decomposition into $j$ uncorrelated persistent components that can vary smoothly over time. In particular, it allows the construction of a stationary approximation (with existing derivative processes) for the process $x_{t,T}$ with time-varying uncorrelated persistent components.
\begin{equation}
x_t^{\{j\}}(u)=\sum_{k=0}^{+\infty} \beta^{\{j\}}(u,k) \epsilon_{t-k2^j}^{\{j\}}
\end{equation}
Furthermore, the original process can be reconstructed as
\begin{equation}
x_t(u)=\sum_{j=1}^{+\infty} x_t^{\{j\}}(u),
\end{equation}
In other words, we can decompose the time series into uncorrelated components with different levels of persistence at any fixed point in time. Note that $\epsilon_t^{\{j\}}$ is a localized MA($2^j-1$) with respect to fundamental components of $x_{t,T}$, 
and $\beta^{\{j\}}(u,k)$ is the time-varying multiscale impulse response associated with scale $j$ and time shift $k 2^j$ at a fixed time approximated by $u$.

This decomposition allows us to examine time-varying impulse responses at different levels of persistence. A scale-specific impulse response provides precise information on how a unit shock to the system propagates at different horizons at a given point in time. For example, in the case of daily data, the first scale, $j=1$, describes how a unit shock dissipates over two days, the second scale, $j=2$, describes this process over four days, and so on.

\subsection{Obtaining Time-Varying Persistent Structures from Data}

The proposed approach can be used to identify localized time-varying persistent structures in time series data. Next, we use the approach to build a parametric model that can be used to increase forecasting accuracy. The first step is to obtain the quantities described in the previous section.

Considering the model assumptions, we assume that the economic variable of interest follows a time-varying parameter autoregressive (TV-AR($p$)) model with $p$ lags.
\begin{equation}
x_{t,T}=\phi_0\left(\frac{t}{T}\right)+\phi_1\left(\frac{t}{T}\right)x_{t-1,T}+\ldots+\phi_p \left(\frac{t}{T}\right)x_{t-p,T} + \epsilon_t
\label{eq:tvar}
\end{equation}
This model has a representation according to in Proposition \ref{prop} and can be locally approximated, under appropriate conditions, by a stationary process $x_{t,T} \approx x(u)$ for a given $t/T \approx u$ with $\phi_i\left(\frac{t}{T}\right) \approx \phi_i(u)$. To obtain the decomposition, we need to identify the time-varying coefficient estimates $\widehat{\Phi}\left(\frac{t}{T}\right)=\left(\widehat{\phi}_1\left(\frac{t}{T}\right), \ldots,\widehat{\phi}_p\left(\frac{t}{T}\right)\right)'$ based on the centered data $\widetilde{x}_{t,T} = x_{t,T}-\widehat{\phi}_0\left(\frac{t}{T}\right)$. This is particularly important for some datasets exhibiting clear temporal trends, whereas this step may be negligible for other datasets. 

\subsubsection{Local Linear Estimation}

We estimate the coefficient functions $\phi_i\left(\frac{t}{T}\right)$ using the local linear method. Local linear estimation methods have been used in nonparametric regression estimation because of their attractive properties, such as high efficiency, reduced bias and adjustments for boundary effects \citep{fan1996local}. Assuming that each $\phi_i\left(\frac{t}{T}\right)$ has a continuous second-order derivative in the interval $[0,1]$, the function can be approximated around $u$ by a linear function via the first-order Taylor expansion.
\begin{equation}
\phi_i\left(\frac{t}{T}\right) \approx \phi_i(u) + \phi'_i(u)\left(\frac{t}{T}-u\right),
\end{equation}
where $\phi'_i(u)=\partial \phi_i(u)/\partial u$ is the first derivative. On the basis of the local approximation of the model \ref{eq:tvar}, we can minimize the locally weighted sum of squares to estimate the parameters $\Phi\left(u\right)=\left\{\phi_1(u),\ldots,\phi_p(u)\right\}'$.
\begin{equation}
\left\{\widehat{\Phi}\left(u\right),\widehat{\Phi}'\left(u\right)\right\} = \underset{(\theta,\theta')\in \mathbb{R}^2}{\mathrm{argmin}} \sum^T_{t=1} \left[\widetilde{x}_{t,T} - U_{t,T}^{\top}\theta-\left(\frac{t}{T}-u \right)U_{t,T}^{\top}\theta' \right]^2 K_b\left(\frac{t}{T}-u\right) 
\label{eq:tvols},
\end{equation}
where $U_{t,T}=\left(x_{t-1,T},x_{t-2,T},\ldots,x_{t-p,T} \right)^{\top}$, and $K_b(z) = 1/b K(z/b)$ is a kernel function where $b=b_T>0$ is a bandwidth satisfying the conditions that $b\rightarrow 0$ and $T b \rightarrow \infty$ as $T \rightarrow \infty$. Note that $b$ controls the amount of smoothing used in the local linear estimation. Roughly, we fit a set of weighted local regressions with an optimal window size chosen according to bandwidth $b$, as discussed below. The estimator has a general expression that can be obtained via elementary computations \citep{fan1996local}, and the estimates of the coefficients are asymptotically normally distributed under some regularity conditions.

Note that we use centered data $\widetilde{x}_{t,T} = x_{t,T}-\widehat{\phi}_0\left(\frac{t}{T}\right)$, which we obtain as
\begin{equation}
\left\{\widehat{\phi}_0\left(u\right),\widehat{\phi}'_0\left(u\right)\right\} = \underset{(\mu, \mu')\in \mathbb{R}^2}{\mathrm{argmin}} \sum^T_{t=1} \left[x_{t,T} - \mu - \mu' \left(\frac{t}{T}-u \right)\right]^2 K_b\left(\frac{t}{T}-u\right),
\label{eq:tvols_const}
\end{equation}

The local linear estimator is sensitive to the choice of bandwidth $b$; therefore, choosing an appropriate bandwidth in applications is critical. Here, we follow the commonly used cross-validation bandwidth for the time series case \citep{hardle1992kernel}

Finally, after obtaining the time-varying coefficients, we express the time series as a (local) Wold's MA representation with the process $\epsilon_t$ 
(see Eq. \ref{eq:wold} and Appendix \ref{sec:app_wold} for a detailed implementation of the recursion). We then use the result of Proposition \ref{prop} to obtain the horizon-specific impulse response coefficients associated with scale (horizon) $j$ and time shift $k2^j$. The decomposition must be truncated at a finite number of scales $J$ and observations $T$. Note that the choice of window $T$ and bandwidth limits the maximum level of decomposition $J$, and the smaller the window is, the lower the level of decomposition possible. Thus, these choices limit the detection of components with higher degrees of persistence. Therefore, a finite version of the time-varying EWD at a given time $u$ is considered.
\begin{equation}
\widehat{x}_{t,T}=\sum_{j=1}^{J} \widehat{x}_{t,T}^{\{j\}} + \pi_t^{\{J\}}=\sum_{j=1}^{J} \sum_{k=0}^{N-1} \widehat{\beta}^{\{j\}}(u,k) \widehat{\epsilon}_{t-k2^j}^{\{j\}} + \pi_t^{\{J\}}(u),
\end{equation}
where $\widehat{\epsilon}_t^{\{j\}} = \frac{1}{\sqrt{2^j}}  \left( \sum_{i=0}^{2^{j-1}-1} \widehat{\epsilon}_{t-i} - \sum_{i=0}^{2^{j-1}-1} \widehat{\epsilon}_{t-2^{j-1}-i} \right)$. In addition, $\pi_t^{\{J\}}$ is a residual component at scale $J$, defined as $\pi_t^{\{J\}}(u) = \sum_{k=0}^{+\infty} \gamma_k^{\{J\}}(u) \epsilon_{t,T}^{\{J\}}$, and $\epsilon_{t}^{\{J\}}=\frac{1}{\sqrt{2^J}}\sum_{j=0}^{2^J-1} \epsilon_{t-i}, \gamma_k^{\{J\}}(u)=\frac{1}{\sqrt{2^J}}\sum_{j=0}^{2^J-1} \alpha(u,k2^J+i)$. The estimates of scale-specific coefficients $\widehat{\beta}^{\{j\}}(u,k)$ are computed as $\widehat{\beta}^{\{j\}}(u,k)= \frac{1}{\sqrt{2^j}}  \left( \sum_{i=0}^{2^{j-1}-1} \widehat{\alpha}(u,k2^j+i) - \sum_{i=0}^{2^{j-1}-1} \widehat{\alpha}(u,k2^j+2^{j-1}+i) \right)$. The residual component $\pi_t^{\{J\}}(u)$ is usually negligible, so we do not consider it in the estimation.

\subsection{Forecasting Models with Time-Varying Persistence}
\label{sec:forecasting}

One of the main advantages of our model is that it captures a persistently changing structure that can be used for forecasting purposes. In many cases, it is unrealistic to assume that the stochastic structure of a time series remains stable over longer periods. Furthermore, nonstationarity can be observed in shorter time series, and forecasting under the assumption of stationarity can be misleading. A common approach for dealing with nonstationarity is to assume a model with a smoothly changing trend and variance but with a stationary error process \citep{stuaricua2005nonstationarities}. While several authors have considered forecasting in a locally stationary setting \citep{dette2022prediction}, our approach extends these models by exploring the smoothly changing persistent structures within the data.

The main aim of a forecasting model based on time-varying persistence decomposition is to provide an $h$-step-ahead predictor for the unobserved $x_{T+h,T}$ from the observed $x_{1,T},\ldots,x_{T,T}$ data. Given the values of the parameters $\widehat{\beta}^{\{j\}}(u,k)$ and $\widehat{\epsilon}_t^{\{j\}}$, we can write the original series $x_{t, T}$ as a sum of a deterministic time trend $\widehat{\phi}_0\left(t/T\right)$ and the orthogonal persistence components, $\widehat{x}_{t,T}^{\{j\}}$, as follows:
\begin{equation}
x_{t,T}=\widehat{\phi}_0\left(t/T\right) + \sum_{j=1}^{J} \widehat{x}_{t,T}^{\{j\}}.
\end{equation}
Conditional $h$-step-ahead forecasts can then be obtained directly by combining the trend forecast with the forecast of the scale components. Additionally, we can take advantage of including only a selection of the most informative persistence components when building the model. In other words, we can select components from a set $\mathbb{D}$ containing available persistence components $\{1,\ldots,J\}$ such that they provide the greatest increase in accuracy with respect to a specific criterion, such as the out-of-sample loss function. Furthermore, the introduction of weights $w^{\{j\}}$, which can be used to identify the importance of a particular $j$ in the forecasted time series, can improve the forecasting performance. This enables us to emphasize or suppress certain horizons in the forecasting model. Thus, the general forecasting model can be written as follows:
\begin{equation}
\E_t[x_{T+h,T}]=\E_t[\widehat{x}_{T+h,T}^{\{0\}}] + \sum_{j\in \mathbb{D}} \widehat{w}^{\{j\}}\E_t[x_{T+h,T}^{\{j\}}].
\end{equation}
In our work, we start with all persistence components on scales $j=1,\ldots,J$, where $J$ is chosen arbitrarily provided that $2^J<T$. For practical purposes, we recommend selecting a smaller value of $J$, since higher scales result in coarser time localization and lower accuracy. This is because the number of available time points decreases as $J$ increases. Therefore, for implementation purposes, the connection is important when choosing window sizes for local estimation. We then estimate the values of $\widehat{w}^{\{j\}}$ using linear regression. Note that if we restrict all $w^{\{j\}}=1$, we strictly follow the EWD. In addition, we can optimize the weights with respect to the out-of-sample metric and use shrinkage techniques to select the optimal components in $\mathbb{D}$ and the associated weights.

The conditional expected value of the trend, $\E_t[\widehat{x}_{T+h,T}^{\{0\}}]$, is forecasted as TV-AR(1)\footnote{The process is forecasted with the local linear estimator in \ref{eq:tvols}, with the Epanechnikov kernel having a width denoted as the kernel width - 2 in the subsequent forecasting exercises.}, and the conditional expectation of the scale components $\E_t[x_{t+h,T}^{\{j\}}]$ is computed by shifting the localized Wold coefficients by the $h$ steps. Thus,
\begin{equation}
\E_t[x_{t+h,T}] = \sum_{p=-\infty}^{+\infty} \alpha_{t,T} \left(p+h\right) \epsilon_{t-p}.
\end{equation}
This shift is then processed via the time-varying EWD as
\begin{equation}
\E_t[x_{t+h,T}] = \sum_{j=1}^{J} \sum_{k=0}^{N-1} \widehat{\beta}^{\{j\}}(u,k,h) \widehat{\epsilon}_{t-k2^j}^{\{j\}},
\end{equation}
where
\begin{equation}
\widehat{\beta}^{\{j\}}(u,k,h)= \frac{1}{\sqrt{2^j}}  \left( \sum_{i=0}^{2^{j-1}-1} \widehat{\alpha}(u,k2^j+i+h) - \sum_{i=0}^{2^{j-1}-1} \widehat{\alpha}(u,k2^j+2^{j-1}+i+h) \right).
\end{equation}

Therefore, once the time-varying EWD of $x_{t,T}$ is known, an $h$-step-ahead forecast can be obtained from the above definitions. The model is highly adaptable to different contexts, but its effective application depends on tailoring the model to the data, which requires making several choices related to the sample size and the richness of the data structure being studied. We provide a package for this model called \texttt{tvPersistence.jl} in \textsf{JULIA} at \url{https://github.com/barunik/tvPersistence.jl}.\\

\subsection{Pockets of Predictability and Benchmark Prediction Models}

A natural benchmark model to consider in addition to the random walk (RW) model commonly used in the inflation forecasting literature is a simple AR model (AR($p$)) that captures persistence within covariance stationary data with $p$ parameters, $x_t=\phi_0+\phi_1x_{t-1},\ldots,\phi_px_{t-p}+\epsilon_t$, where $\epsilon_t$ is a \textit{iid} random variable. The EWD of \cite{ortu2020} can be used as a natural benchmark model to capture the persistence of stationary data with heterogeneous components: 
\begin{equation}
x_{t}=\sum_{j=1}^{J} \sum_{k=0}^{N-1} \beta^{\{j\}}_k \epsilon_{t-k2^j}^{\{j\}},
\end{equation}
where $\beta^{\{j\}}_k$ are scale-specific moving average coefficients associated with the approximated $AR($p$)$ model. Then, a time-varying autoregressive TV-AR($p$) model in which the coefficients are allowed to vary over time can be formulated as
\begin{equation}
x_{t,T}=\phi_0\left(\frac{t}{T}\right)+\phi_1\left(\frac{t}{T}\right)x_{t-1,T}+\ldots+\phi_p \left(\frac{t}{T}\right)x_{t-p,T} + \epsilon_t.
\label{eq:tvar_model}
\end{equation}
This model can be estimated via local linear regression, as outlined above.

To evaluate the benefits of using localized heterogeneously persistent components for predictions, we follow the ``pocket of predictability'' approach proposed by \cite{farmer2023pockets}, who suggested using a measure of relative prediction accuracy, defined as the squared error difference (SED) between a benchmark forecast $\overline{x}_{t+h}$ and the forecast produced by the local regression model $\widehat{x}_{t+h,T}$.
\begin{equation}
\text{SED}^h_t=\left(x_t-\overline{x}_{t+h}\right)^2-\left(x_t-\widehat{x}_{t+h,T}\right)^2
\end{equation}
Periods with $\text{SED}_t^h>0$ indicate that the local regression model produces a more accurate forecast than the benchmark model according to the squared error. To identify the periods with more accurate forecasts, which \cite{farmer2023pockets} called ``pockets of predictability'', we estimate
\begin{equation}
\widehat{\text{SED}}^h_t=\widehat{\gamma}_0\left(\frac{t}{T}\right)+\widehat{\gamma}_1\left(\frac{t}{T}\right)\nu_t
\end{equation}
using a one-sided Epanechnikov kernel and identify pockets of predictability during which $\widehat{\text{SED}}^h_t>0$. To measure the total amount of predictability gained by exploring the pockets, we can define
\begin{eqnarray}
1/T \sum_{\tau=1}^T\mathcal{LR}(\tau) \mathbf{1}_{\{ \widehat{\text{SED}}^h_{\tau}>0\}}
\end{eqnarray}
where we use $\mathcal{LR}(u)=\frac{ \sum_{t=1}^T K_b\left(t/T-u\right)\mathcal{L}^{\text{OOS}}_{\widehat{x}_{t+h,T}}}{\sum_{t=1}^T K_b\left(t/T-u\right)\mathcal{L}^{\text{OOS}}_{\overline{x}_{t+h,T}}}$ with $\mathcal{L}^{\text{OOS}}_{(.)}$ as the root mean square error (RMSE) or mean absolute error (MAE) loss function of a given model. The ratio $\mathcal{LR}(u)$ is a ratio of the out-of-sample local RMSE (or MAE) of the localized prediction model to that of the benchmark and quantifies the gains associated with the use of the localized predictability pockets.

The definition builds on the practice, started by \cite{welch2008comprehensive}, of studying the evolution of predictability over time according to the sums of the squared forecast error differences. These differences are also the basis for formal comparisons of economic forecasting performance \citep{farmer2023pockets}. However, $\text{SED}^h_t$ allows temporary predictability to be identified via local estimates of trends in the relative forecast accuracy. Note that we do not need to calculate the standard errors for the estimates to identify these pockets. This property is particularly important for out-of-sample forecasts. One could also easily specify that a pocket can be triggered only after a certain number of periods.

We also use simulations to determine the extent to which our approach spuriously identifies pockets of predictability. Since we repeatedly compute local, overlapping test statistics, the question is whether we find more pockets than would be expected by chance, given a reasonable persistence model for the underlying dynamics. To address the spuriously generated pockets, we simulate highly persistent variables to match the persistence found in the data using AR models and use bootstrapping with replacement to construct bootstrap samples, based on which we compute the bootstrapped confidence intervals for the $\widehat{\text{SED}}^h_{\tau}$ statistics. This simple strategy allows us to discriminate between spurious and non-spurious pockets by examining the statistics for each pocket and calculating the percentage of simulations with at least one pocket matching that value. We use pockets that have less than a 5\% chance of being randomly generated.


\section{Monte Carlo Study: Time-Varying Persistence and TV-EWD}
\label{sec:simulations}

Before moving on to empirical applications, we conduct a brief Monte Carlo exercise to understand the advantages of our method and the nature of the dependent structures captured by our model. We consider simulated data with different sources of persistence and test the performance of the TV-EWD approach against alternative methods. These sources of persistence are chosen to mimic different types of persistence found in the literature and considered in our applications for forecasting inflation and volatility in later sections, such as simple persistence generated by an AR model, structural breaks, or more complicated structures generated by time-varying processes. To motivate the cases where the focus on time-varying persistent structures is most helpful, we consider the following four data generation processes (DGPs). First, we generate the data using a simple AR(1) process:
\begin{equation}
\text{DGP I:} \hspace{1cm} x_t=\phi x_{t-1}+\epsilon_t
\end{equation}
with three levels of persistence $\phi \in \{0.5,0.75,0.9\}$. Second, we consider a process in which the persistence varies smoothly over time at different rates and compare the performance of our method with that of alternative benchmarks in which the process becomes increasingly ``locally stationary'', formalized as:
\begin{eqnarray}
 \nonumber x_t&=&\phi_t x_{t-1}+\epsilon_t \\
\text{DGP II :} \hspace{1cm}\phi_t &=& 0.95 \sin(2\pi kt/T)
\end{eqnarray}
With this approach, for $k \in \{0.75,1.5,3\}$, we have three cases with persistent structures that change faster within the interval $(-0.95,0.95)$ using the $\sin(.)$ function, which completes 0.75, 1.5 or 3 cycles within the period. In other words, for the fastest process with $k=3$, the persistence parameter is changed from 0.95 to -0.95 and back three times.

Third, we use two components that drive time-varying persistence, a transitory component and a more persistent component, as formulated in DGP III:
\begin{eqnarray}
 \nonumber x_t&=&\phi_t x_{t-1}+\epsilon_t \\
\text{DGP III :} \hspace{1cm} \phi_t &=& 0.95 \sin(2\pi 3 t/T) + 0.95 \sin(2\pi 1.5t/T).
\end{eqnarray}
Finally, we consider a process with structural breaks that contain abrupt changes in parameters at different times, reflecting shifts in the process of data generation. This process involves structural breaks that occur at times $T_1, T_2, \dots, T_m$ as follows:
\begin{equation}
\text{DGP IV:} \hspace{1cm}  x_t = \sum_{j=0}^{m} \left(\phi_j x_{t-1} \right) \cdot \mathbf{1}_{\{ T_{j} < t \le T_{j+1} \}} + \varepsilon_t,
\end{equation}
where $\phi_j$ is the persistence coefficient in regime $j=0,1,2,3$ and where $\mathbf{1}_{t<T}$ is an indicator function. Specifically, we consider three cases: a case with one regime $m=1$, such as $\phi_j=\{-0.8,0.8\}$ and $T_1=800$; a case with two regimes $m=2$, such as $\phi_j=\{-0.8,0.8,-0.8\}$ and $T_1=600$; $T_2=950$; and a case with three regimes $m=3$, such as $\phi_j=\{0.8,-0.8,0.8,-0.8\}$ and $T_1=600$; $T_2=800$; and $T_3=800$.
\begin{table}[h!] 
\caption{\footnotesize{Medians of the mean square errors (MSEs) of AR(1), AR(3), EWD, TV-AR(3) and TV-EWD relative to the random walk model from 200 realizations of the following DGPs. DGP I: $x_t=\phi x_{t-1}+\epsilon_t$, DGP II: $x_t=\phi_t x_{t-1}+\epsilon_t$ with $\phi_t = 0.95 \sin(2\pi kt/T)$ and $k \in \{0.75,1.5,3\}$, DGP III: with $\phi_t = 0.95 \sin(2\pi 3 t/T) + 0.95 \sin(2\pi 1.5t/T)$ and DGP IV: $x_t = \sum_{j=0}^{m} \left(\phi_j x_{t-1} \right) \cdot \mathbf{1}_{\{ T_{j} < t \le T_{j+1} \}} + \varepsilon_t$ with $m=3$ regimes $j=0,1,2,3$.}}
\begin{center} 
\footnotesize 
\begin{tabular}{lccccccccccccc}
  \toprule
  & \multicolumn{3}{c}{DGP I ($\Phi$)} & & \multicolumn{3}{c}{DGP II ($k$)} & & DGP III & & \multicolumn{3}{c}{DGP IV ($m$)}\\ 
\cmidrule(r){2-4} \cmidrule(r){6-8} \cmidrule(r){10-10} \cmidrule(r){12-14}
  & 0.5 & 0.75 & 0.9 && 0.75 & 1.5 & 3 & & & & 1 & 2 & 3 \\
  \cmidrule(r){2-4} \cmidrule(r){6-8} \cmidrule(r){10-10} \cmidrule(r){12-14}
  AR(1) &0.868 & 0.938 & 0.977 & & 1.233 & 1.105 & 0.680 & & 0.765 & & 0.986 & 0.834 & 0.662 \\
  AR(3) & 0.869 & 0.939 & 0.979 & & 1.031 & 0.822 & 0.508 && 0.721 & & 0.757 & 0.619 & 0.504 \\
  EWD & 0.887 & 1.026 & 1.458 & & 1.143 & 0.855 & 0.519 & & 0.734 && 0.806 & 0.637 & 0.512 \\
  TV-AR(3) & 1.076 & 1.385 & 1.581 & & 1.047 & 0.819 & 0.532 && 0.727 & & 0.830 & 0.641 & 0.554 \\
  TV-EWD & 0.900 & 1.013 & 1.254 & & 0.884 & 0.668 & 0.452 && 0.638 & & 0.628 & 0.490 & 0.438 \\
  \bottomrule
\end{tabular}
\end{center} 
\label{tab:sims} 
\end{table}

To compare the performance of the TV-EWD approach with that of benchmark models, we consider models that capture persistence, such as AR(1) and AR(3). In addition, we consider EWD, TV-AR(3) and TV-EWD models, which capture persistence due to heterogeneous components and persistence due to time variations in the process parameters. 
The selection of these models is not arbitrary. They are used as benchmarks in the later applications and also used here to understand the sources of the improvements in prediction performance achieved by the TV-EWD method.

Table \ref{tab:sims} shows the medians of the out-of-sample forecast performance metrics of all the models estimated based on the 200 randomly generated DGPs, measured via the mean square errors (MSEs) relative to the RW model. Considering our real-world applications, we consider T=1268 observations, keeping 610 as in-sample forecasts and the rest as out-of-sample forecasts. For the EWD and TV-EWD models, we consider $J=5$ scales.

When the process is only very persistent, as in DGP I, more complex models that capture persistence due to shocks caused by different components with heterogeneous structures (EWD) and time variations (TV-AR(3) and TV-EWD) do not add value to the simple AR(1) model, and we should not expect complicated models to add value. However, with more complex dependence structures, the benefits of such models become clear. Specifically, for DGP II and DGP III, both types of persistent approaches (AR(1) and EWD, capturing unconditional persistence, and TV models capturing localized persistence) lead to performance improvements. 
Most importantly, their combination in the TV-EWD approach provides the best result, which is particularly relevant in situations in which shocks have localized heterogeneous structures. DGP III, which combines coefficients with transitory and persistent time variations, has a rather complicated structure but is well suited for use with the TV-EWD method. Similarly, the TV-EWD method performs well with an increasing number of structural breaks, as in DGP IV.


\section{Time-Varying Persistence in Data}

The proposed approach is useful for any problem in economics and finance that requires working with data whose persistent structures are expected to change over time. Here, we want to demonstrate the importance of identifying persistent structures within two different and important datasets. Both time series are different in nature but share the common feature of smoothly changing persistent structures.

In the first example, we examine the time-varying persistent structures in an inflation time series, which is one of the most important time series in macroeconomics. While the properties of aggregate inflation are ultimately of interest to policymakers, an important factor underlying the behavior of inflation over time is the characteristics and determinants of the behavioral mechanisms underlying price setting. The persistence of inflation has direct implications for monetary policies. While time-varying models have been used in the literature \citep{lansing2009time} to capture time variations within such time series data, several authors have considered the decomposition of inflation into transitory and permanent components \citep{stock2007has}. Here, we build a more flexible model to explore the time-varying persistent structures within inflation time series and identify pockets of predictability due to local structures within the inflation series.

The second example is stock market volatility. Similar to inflation, stock market volatility is one of the key measures indicating risk and uncertainty. The study of heterogeneous persistent structures within such data, which evolve dynamically over time, is useful to a wide audience.

\subsection{Time-Varying Pockets of Persistence in U.S. Inflation}

The data used in this analysis are the Personal Consumption Expenditures (PCE) price index\footnote{The Personal Consumption Expenditures price index measures the prices that U.S. consumers pay for goods and services. The change in the PCE price index captures inflation or deflation across a wide range of consumer expenditures.} available on the Federal Reserve of St. Louis website\footnote{\url{https://fred.stlouisfed.org}} as a proxy for U.S. inflation. Our dataset contains 781 monthly observations over the period from January 1959 to February 2023, and we examined the log change in the index.

Inflation is an interesting time series for our analysis because the shocks that drive inflation have varying degrees of persistence and tend to change over time. Inflation is driven by different shocks in stable and turbulent periods. The smoothly changing persistent structure of inflation remains hidden to the observer when classic time series tools such as impulse response functions are used for analysis.

Figure \ref{fig:PCE_persistence} illustrates this idea using our TV-EWD approach. Specifically, the plot shows the ratio of $\widehat{\beta}^{j}(t/T,k)$ to the sum across scales $\sum_j \widehat{\beta}^{j}(t/T,k)$, where the $j$ scales represent 2-, 4-, 8-, 16-, and 32-month persistent shocks at $k=1$. That is, we investigate the relative importance of information at the $2^j$ horizon in the multiscale impulse response function.
 \begin{figure}[t!]
            \begin{center}
                \includegraphics[scale=0.5]{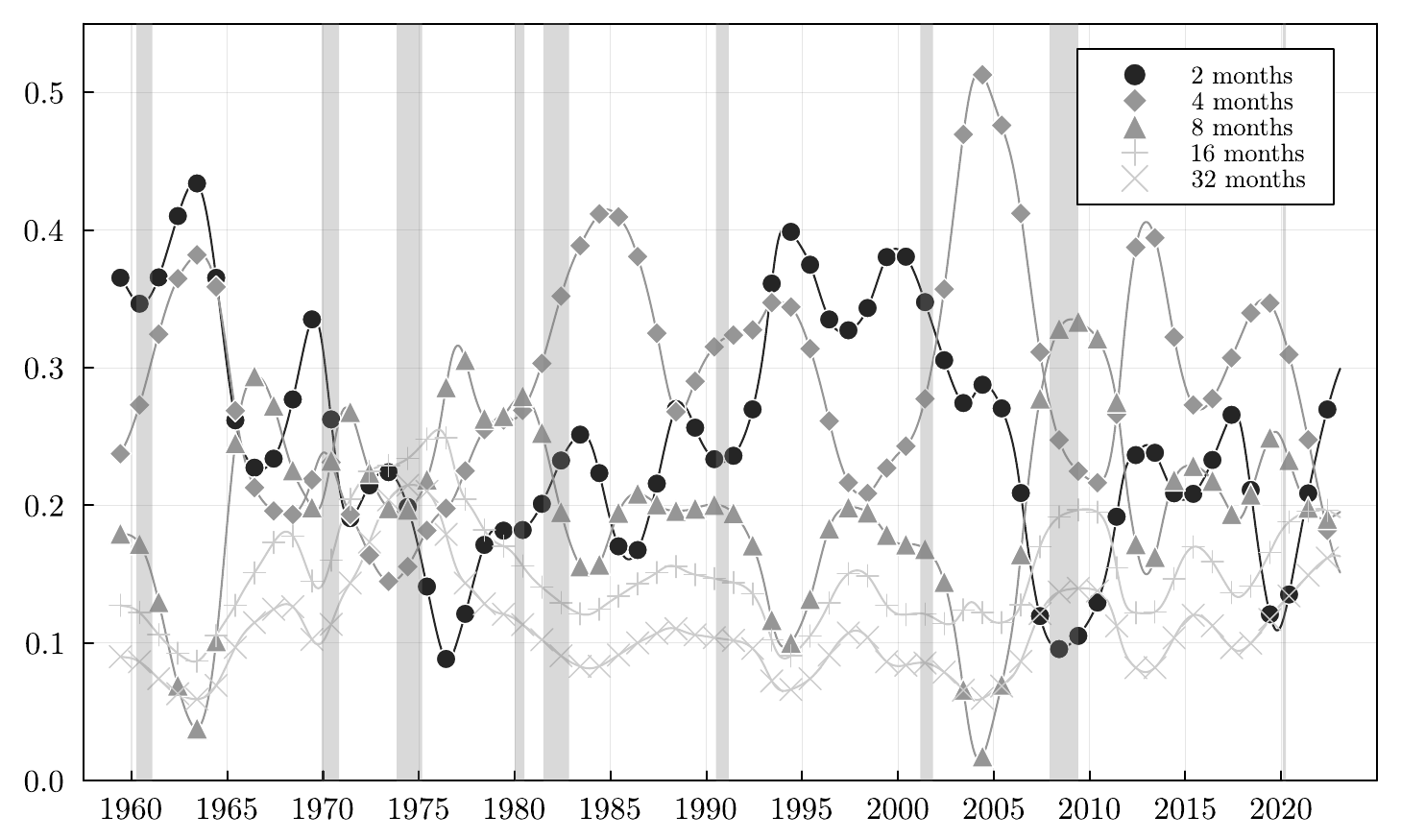}
            \end{center}
            \caption{\footnotesize{Time-varying persistent structure within a U.S. inflation dataset. The plot shows the ratios of $\widehat{\beta}^{j}(t/T,1)/\sum_j \widehat{\beta}^{j}(t/T,1)$ with j corresponding to the 2-, 4-, 8-, 16-, and 32-month persistent shocks over the period from January 1959 to February 2023. The gray areas mark the NBER recession periods.}}
            \label{fig:PCE_persistence}
\end{figure} 
For each period, we identify the persistent shock structures affecting the U.S. inflation series. There are periods where most of the shocks have transitory durations of 2 or 4 months. For example, transitory shocks of up to two months had the largest share of information in the years 1959--1964, 1968--1970, and 1994--2001. In contrast, the changes in the years 1966--1967, 1976--1978 and 2008--2010 were driven mainly by more persistent shocks with durations of up to 8 months. Interestingly, distinct persistent structures are observed during different crises, which are marked by NBER recession periods in the plot. During the recession from November 1973 to March 1975, inflation was driven mainly by shocks lasting 16 and 32 months, which were therefore very persistent.

We find that the persistent structures in the inflation time series are rich and change smoothly over time. We next examined how the identified local persistent structures are useful for forecasting U.S. inflation.

\subsubsection{Forecasting Inflation}

The exploratory analysis in the previous section shows that the persistent structures in the U.S. inflation series are rich and vary smoothly over time. We want to explore this feature to propose a forecasting model based on the precisely identified time-varying persistent structures. Since our approach combines two types of time series representations, one allowing shocks to vary over time and the other allowing shocks to have different degrees of persistence, it is natural to compare the performance of the TV-EWD model with that of the TV-AR and EWD models, which capture time variations and heterogeneous persistent structures separately. As a natural comparison, we also use a simple AR model, which is often used in the literature and aggregates both dimensions. We compare the performance of all these models with that of a RW model, which is commonly used in the literature.

For estimation and forecasting with the TV-EWD model, we use the procedures described in Section \ref{sec:forecasting}, with $J=5$ scales and kernels of 0.6 for the trend and 0.2 for the moving average parameters. We choose two lags in the AR approximation model to minimize out-of-sample losses. Importantly, the TV-AR models underlying our decomposition typically require much lower-order AR coefficients, as the time-varying constant captures a substantial part of the persistence due to various local trends, seasonality, inertia in inflation dynamics and abrupt shifts \citep{stock2007has,lansing2009time,primiceri2005time}. 
We therefore use two lags, which provide the most accurate forecasts from the time-varying parameters used in the other models. 
Unlike time-varying models, the AR$(p)$ model typically requires higher-order lags, reaching up to 12, because a constant inflation process is assumed within the model, and the model struggles to account for nonlinear relationships. The choice of lags is often sensitive, with increasing numbers of lags leading to potential overfitting and poorly identified coefficients. Since our main objective is to identify pockets of predictability that occur due to localized persistent structures, we use the same lag length for the AR model. The use of more lags led to better results for the AR model but did not improve the identification of localized pockets; thus, the models in the analysis were set to be comparable and parsimonious.
\clearpage
\begin{table}[t!] 
\caption{\footnotesize{The table reports the forecasting performance measured according to the root mean square error (RMSE) and mean absolute error (MAE) for our time-varying extended Wold decomposition (TV-EWD) model, time-varying autoregressive (TV-AR) model, autoregressive (AR) model and extended Wold decomposition (EWD) model relative to benchmark models. The three panels report results for three cases. Panel A reports the results for the full sample. Panel B reports the results of the time-varying models for the periods defined by the pockets. The performance of the models in Panel A and Panel B is reported relative to that of the random walk benchmark model for 1-, 2-, 6- and 12-month forecasts. Panel C reports the results for the periods identified as pockets relative to the AR model. We only use pockets that have less than a 5\% chance of being spurious. The sampling distributions used to determine outliers are from an AR residual bootstrap design.}}
\begin{center} 
\footnotesize 
\begin{tabular}{l cccccccccccc} 
\toprule 
& \multicolumn{2}{c}{1-month-ahead} & & \multicolumn{2}{c}{2-months-ahead} & & \multicolumn{2}{c}{6-months-ahead} & & \multicolumn{2}{c}{12-months-ahead}\\ 
\cmidrule(r){2-3} \cmidrule(r){5-6} \cmidrule(r){8-9} \cmidrule(r){11-12} 
& RMSE & MAE & & RMSE & MAE & & RMSE & MAE & & RMSE & MAE\\ 
\cmidrule(r){2-12} 
\multicolumn{12}{c}{Panel A: Full sample (relative to RW)}\\
\cmidrule(r){2-12} 
AR             & 0.893 & 0.931 & & 0.830 & 0.889 & & 0.754 & 0.869 & & 0.870 & 1.087\\ 
EWD              & 0.911 & 0.941 & & 0.838 & 0.885 & & 0.715 & 0.745 & & 0.783 & 0.752\\ 
TV-AR        & 0.987 & 1.038 & & 0.790 & 0.858 & & 0.700 & 0.707 & & 0.803 & 0.738\\ 
TV-EWD & 0.845 & 0.833 & & 0.760 & 0.792 & & 0.655 & 0.689 & & 0.753 & 0.720\\
\cmidrule(r){2-12} 
\multicolumn{12}{c}{Panel B: Pockets (relative to RW)}\\
\cmidrule(r){2-12} 
TV-AR &  0.807 & 0.760 & &0.819 &  0.838 & &0.599 & 0.604 & &0.541 & 0.435 \\
TV-EWD & 0.789 & 0.723 & & 0.620 & 0.618 & &0.437 & 0.456 && 0.460  & 0.424 \\
\cmidrule(r){2-12} 
\multicolumn{12}{c}{Panel C: Pockets (relative to AR)} \\
\cmidrule(r){2-12} 
TV-AR & 0.856 & 0.873 & &0.735 & 0.751 && 0.582 & 0.586 & &0.541 & 0.522 \\
TV-EWD &  0.817 & 0.809 & &0.668  &0.637 & &0.537 & 0.557  &&0.467 & 0.437 \\
\bottomrule 
\end{tabular} 
\end{center} 
\label{tab:inflation} 
\end{table}

We divide the observed time series into two parts, namely, in-sample $x_{1,T},\ldots,x_{\text{ins},T}$ and out-of-sample $x_{\text{ins}+1,T},\ldots,x_{T,T}$ observations. Then, we use the in-sample data to fit the models and determine the out-of-sample predictive performance using the RMSE and MAE loss functions. Using the first 645 observations for the in-sample observations, we are left with 136 out-of-sample observations. The forecast horizons considered are $h\in\{1,2,6,12\}$ months ahead. The results of the forecast, expressed as the mean of the loss functions relative to the RW model, are shown in Table \ref{tab:inflation}.
\begin{figure}[ht!]
            \begin{center}
                \includegraphics[scale=0.4]{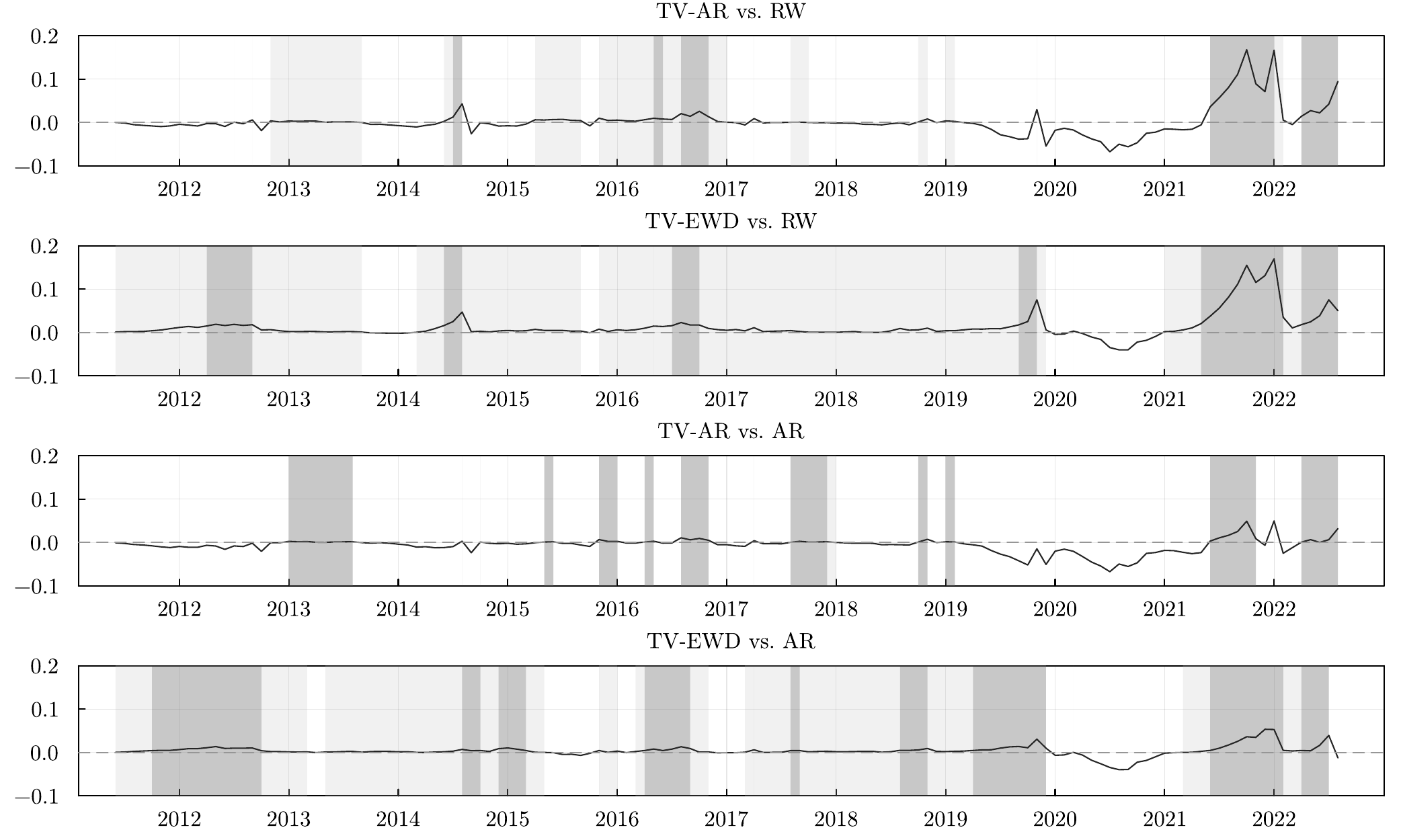}
            \end{center}
            \caption{\footnotesize{Local inflation predictability. The four panels show plots of one-sided kernel estimates of the fitted squared forecast error differential $\widehat{\text{SED}}^h_t$ from time-varying autoregressive (TV-AR) and time-varying extended Wold decomposition (TV-EWD) models versus those of random walk (RW) and autoregressive (AR) models, respectively, for $h=1$. The shaded areas represent periods where $\widehat{\text{SED}}^h_t>0$, with the dark areas representing pockets with less than a 5\% chance of being spurious and light areas representing pockets with more than a 5\% chance of being spurious. The sampling distributions used to determine spuriousness are from an AR residual bootstrap design.}}
            \label{fig:pocketsinflation}
\end{figure} 

The results show that the TV-EWD model provides the best forecasting performance at all forecast horizons. This advantage increases as the forecast horizon increases, suggesting that accurate identification of the rich persistent structures within the inflation time series is important for longer-term forecasting. The performance of the EWD and TV-AR models also improves as the horizon length increases, but the combination of the two approaches, namely, our TV-EWD models, performs best.

Importantly, as shown in panels B and C in Table \ref{tab:inflation}, which provide the results for pockets identified by both the TV-AR and TV-EWD models, our models provide better localization results than both the RW and AR benchmark models. 
Both time-varying models can be used to identify periods with higher losses, which are further illustrated in Figure \ref{fig:pocketsinflation}, which shows plots of one-sided nonparametric kernel estimates of $\widehat{\text{SED}}^h_t$ against time for each of the models. 
The shaded areas represent periods identified as pockets of predictability, with the dark areas representing pockets with less than a 5\% chance of being spurious and the light areas representing pockets with more than a 5\% chance of being spurious.

We conclude that pockets of inflation predictability, identified via a local persistence decomposition approach applied to the inflation series, can be used to increase inflation forecasting accuracy through the use of newly discovered information.

\subsection{Time-Varying Pockets of Persistence in the Volatility of U.S. Stocks}

The second important time series with a potentially interesting structure for our analysis is volatility. Volatility is one of the most important measures in finance, as it captures fluctuations in asset prices and hence their risk. We use daily volatility data\footnote{The realized volatility is computed as the sum of the squared logarithmic 5-minute returns of the sample for each day.} for all stocks listed in the S\&P 500 index from July 5, 2005, to August 31, 2018, from TickData, so we work with data for the 496 stock returns over 3278 days.
\begin{figure}[h!]
\begin{center}
            \includegraphics[scale=0.5]{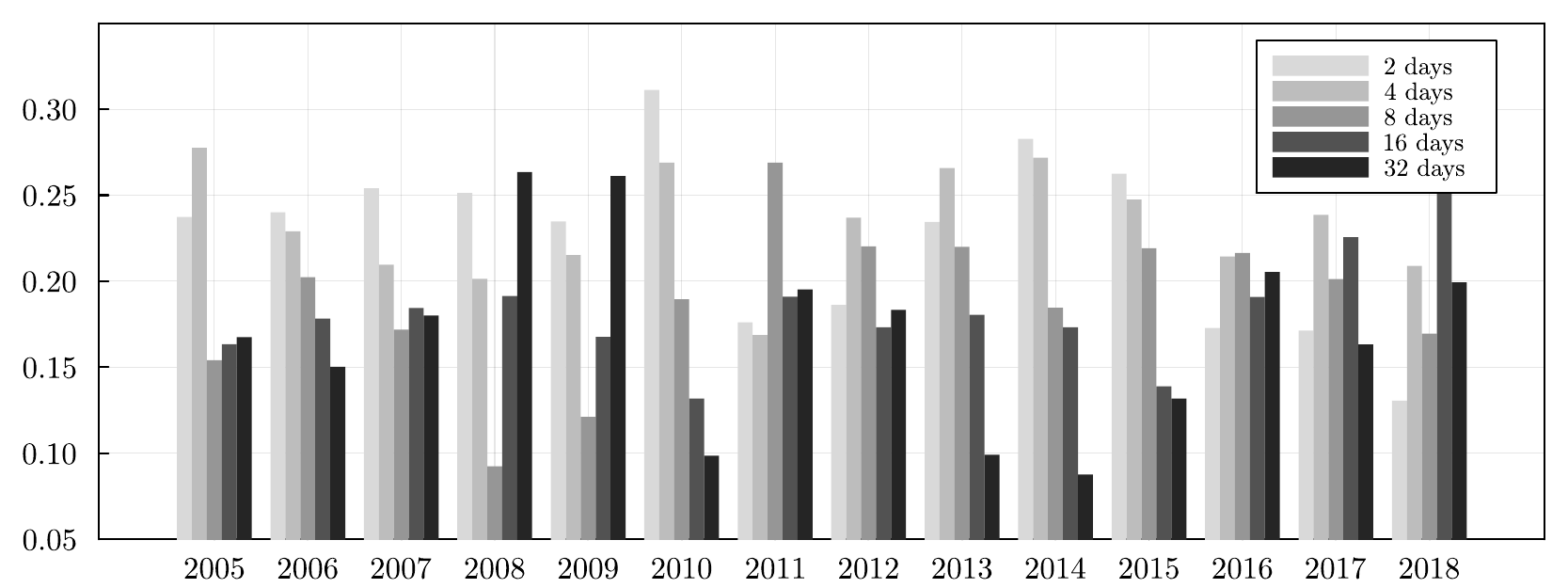}
    \end{center}
        \caption{\footnotesize{Time-varying persistent structures in the series of the realized volatility of Agilent Technologies, Inc. The plots show ratios of $\widehat{\beta}^{j}(t/T,1)/\sum_j \widehat{\beta}^{j}(t/T,1)$, with $j$ corresponding to the 2-, 4-, 8-, 16-, and 32-day persistence of shocks over the period from July 2005 until August 2018.}}
        \label{fig:volatility}       
\end{figure} 

Again, we start by illustrating the persistent structures within the data. Since our sample contains 496 stocks, we examined the first available stock (in alphabetical order): Agilent Technologies, Inc. Note that we examined other stocks and found that the persistent structures within these series are similarly rich to the series discussed in this analysis. Figure \ref{fig:volatility} shows the average $\widehat{\beta}^{j}(t/T,k)/\sum_j \widehat{\beta}^{j}(t/T,k)$ ratio of the $j$ scales, representing 2-, 4-, 8-, 16-, and 32-day persistence of shocks for the sample for each year for $k=1$. Thus, for each year, we can see the average contribution of the shocks to the volatility series. The reason we look at the averages is that the daily sample contains rich dynamics that are difficult to visualize. Moreover, the use of aggregate information for one year strongly supports our objective.

Specifically, we observe some periods that are driven mostly by transitory shocks of up to 4 days, such as 2005, 2010 and 2014, as well as periods that are driven by more persistent shocks, such as 2008, 2011 and 2016--2018.

Overall, the results demonstrate the richness of the persistence dynamics of volatility series. While it is important to capture the time variations of the dependent structures in the series, it is also crucial to capture the smoothly changing persistent structures.

\subsubsection{Forecasting Volatility}

Finally, we use the time-varying persistent structure to build a more accurate forecasting model for volatility. We use the EWD and TV-AR models as natural comparison methods for our TV-EWD model, as these models consider two types of time series representations based on different levels of persistence (EWD) and time variations (TV-AR). In addition, we use the heterogeneous AR (HAR) model proposed by \cite{corsi2009}, which is a standard benchmark in the literature. The HAR model is defined as $x_t=\beta_0+\beta_1x_{t-1}+\beta_5\overline{x}_{t-5}+\beta_{22}\overline{x}_{t-22}+\epsilon_t$, where $\overline{x}_{t-m}=1/m\sum_{h=0}^{m- 1}x_{t-m}$, which is the average of the weekly and monthly realized volatility. This model captures the persistent structures in financial volatility series well. We also consider the extension of the TV-HAR model \citep{wang2017time}:
\begin{equation}
x_{t,T}=\beta_0\left(\frac{t}{T}\right)+\beta_1\left(\frac{t}{T}\right)x_{t-1,T}+\beta_5 \left(\frac{t}{T}\right)\overline{x}_{t-5,T}+\beta_{22} \left(\frac{t}{T}\right)\overline{x}_{t-22,T} + \epsilon_t,
\label{eq:tvar_model}
\end{equation}

We compare the out-of-sample forecasting performance of our TV-EWD model with that of the comparison models based on the realized volatilities of all S\&P 500 companies available in the sample. We estimate the model parameters based on the information set containing the first 1000 observations and use the remaining data for 1-, 5- and 22-day out-of-sample tests. To explore the changing behavior of the data, we also examine different periods to determine how sample specific the results are. The richer the localized structures in the data are, the larger the expected performance gains. According to the literature, in addition to the aggregate results for the entire out-of-sample period from August 2009 to August 2018, we look at two specific periods, August 2009 to August 2012 (at the beginning of the out-of-sample period) and August 2016 to August 2017. Finally, we identify pockets of predictability that we expect to further improve the results compared with those of these three selected samples.

\clearpage
\begin{table}[t!]
\footnotesize 
\caption{\footnotesize{The table reports the forecasting performance, measured according to the root mean square error (RMSE) and mean absolute error (MAE), for our time-varying extended Wold decomposition (TV-EWD), time-varying autoregression (TV-AR), heterogeneous autoregressive (TV-HAR) and extended Wold decomposition (EWD) models. All errors are relative to those of the popular heterogeneous autoregressive (HAR) model over forecast horizon $h$. The figures show the median values for all 496 companies in the sample, calculated based on forecasts for the three periods reported in the first three panels (A, B and C). The last panel (D) reports the results for the periods identified as pockets of predictability. We only use pockets that have less than a 5\% chance of being spurious. The sampling distributions used to determine spuriousness come from an AR residual bootstrap design.}}
\begin{center} 
\begin{tabular}{llccccccc} 
\toprule 
& & \multicolumn{3}{c}{RMSE} & & \multicolumn{3}{c}{MAE} \\ 
\cmidrule(r){3-5} \cmidrule(r){7-9}
& & $h=1$ & $h=5$ & $h=22$ & & $h=1$ & $h=5$ & $h=22$ \\ 
\cmidrule(r){1-5} \cmidrule(r){7-9}
 Panel A: &  EWD &   1.049 & 1.009 & 0.944 & & 0.991 & 0.925 & 0.883 \\
August 2009 - August 2018& TV-AR & 1.033 & 1.09 &  0.944 & & 1.014 & 1.055 & 0.888\\
& TV-HAR & 0.961 & 0.946 & 0.832 & & 0.935 & 0.905 & 0.786\\
& TV-EWD & 0.970 &  0.972 & 0.821 & & 0.924 & 0.901 & 0.760\\
\cmidrule(r){1-5} \cmidrule(r){7-9}
 Panel B: & EWD &   1.051 & 0.981 & 0.974 & & 0.989 & 0.896 & 0.838 \\
August 2009 - August 2012 & TV-AR & 1.061 & 1.218 & 1.084 & & 1.047 & 1.171 & 0.955\\
& TV-HAR & 0.922 & 0.895 & 0.826 & & 0.891 & 0.857 & 0.725\\
& TV-EWD & 0.922 & 0.892 & 0.749 & & 0.872 & 0.823 & 0.629\\
\cmidrule(r){1-5} \cmidrule(r){7-9}
 Panel C: & EWD &   1.002 & 1.021 & 1.032 & & 0.937 & 0.931 & 1.000\\
August 2016 - August 2017  & TV-AR & 1.010 &  1.033 & 1.012 & & 1.006 & 1.049 & 1.002\\
& TV-HAR & 0.957 & 0.958 & 0.968 & & 0.943 & 0.923 & 0.964\\
& TV-EWD & 0.959 & 0.969 & 1.001 & & 0.908 & 0.901 & 0.931\\
\cmidrule(r){1-5} \cmidrule(r){7-9}
  Panel D:& TV-AR &   0.915 & 0.841 & 0.664 && 0.877 & 0.802 & 0.664 \\
   Pockets & TV-HAR & 0.903 & 0.834 & 0.676 && 0.867 & 0.796  &0.665 \\
   & TV-EWD &         0.857 & 0.785 & 0.611 && 0.815 & 0.719 & 0.592 \\
\bottomrule 
\end{tabular} 
\end{center} 
\label{tab:RVs} 
\end{table}

For estimation and forecasting via the TV-EWD model, we use the procedures described in Section \ref{sec:forecasting} with $J=\{5,5,7\}$ scales and $\{2,5,15\}$ autoregressive lags for $h=\{1,5,22\}$ forecasts. These choices are optimized for the best losses. Note that the choice of higher-order lags in the AR model inherently results in better forecasts with an increasing horizon. The kernel width of 0.2 minimizes the MSEs of the forecasts. We compare the forecasting performance of the different models using the MAE and RMSE loss functions relative to the benchmark HAR model.

\begin{figure}[t!]
            \begin{center}
                \includegraphics[scale=0.4]{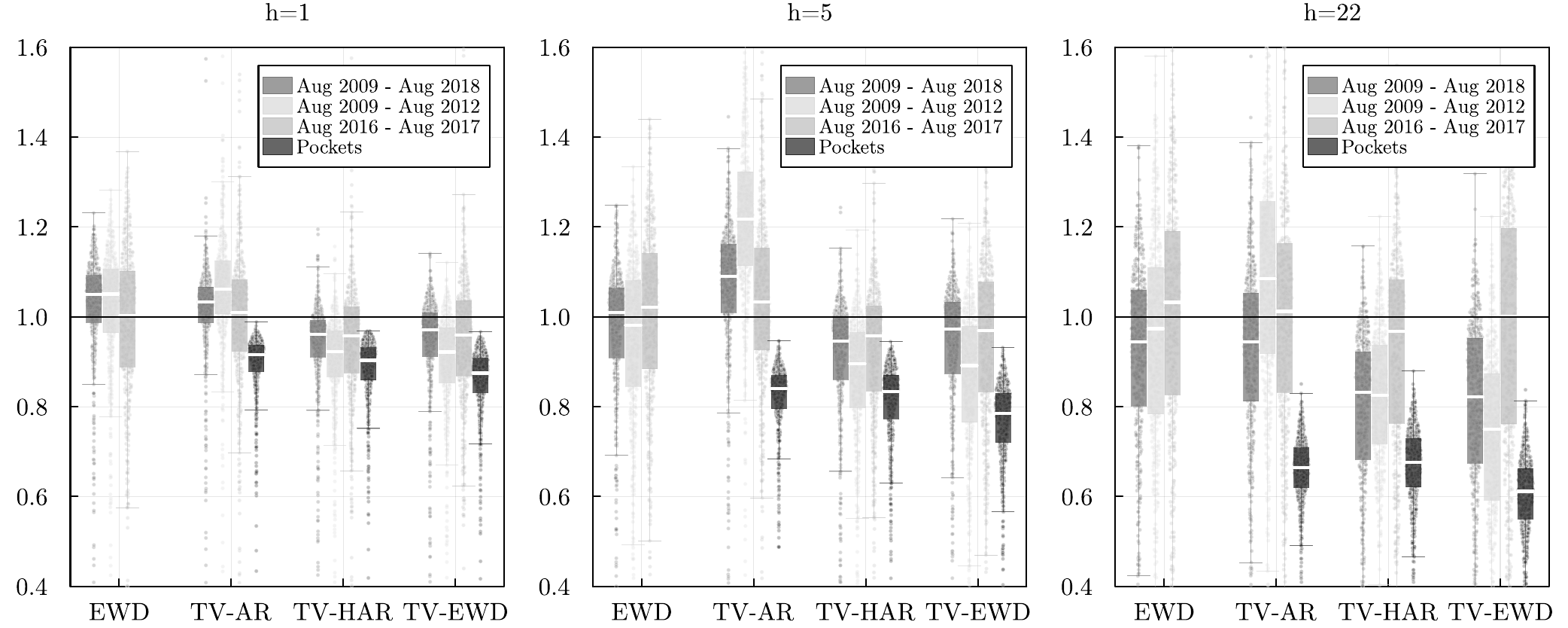}
            \end{center}
            \caption{\footnotesize{Root mean square error of our TV-EWD model compared with the extended Wold decomposition (EWD), time-varying autoregression (TV-AR) and time-varying heterogeneous autoregression (TV-HAR) models. All the errors are relative to those of the HAR model of \cite{corsi2009} over $h=1$ (left), $h=5$ (middle) and $h=22$ (right). The box plots show the RMSE values for all 496 companies in the sample, computed based on the forecasts for the four periods represented by the four colors, including the periods identified as pockets of predictability. We only use pockets that have less than a 5\% chance of being spurious. The sampling distributions used to determine spuriousness come from an AR residual bootstrap design.}}
            \label{rmse_rv}
\end{figure} 

\begin{figure}[t!]
            \begin{center}
                \includegraphics[scale=0.4]{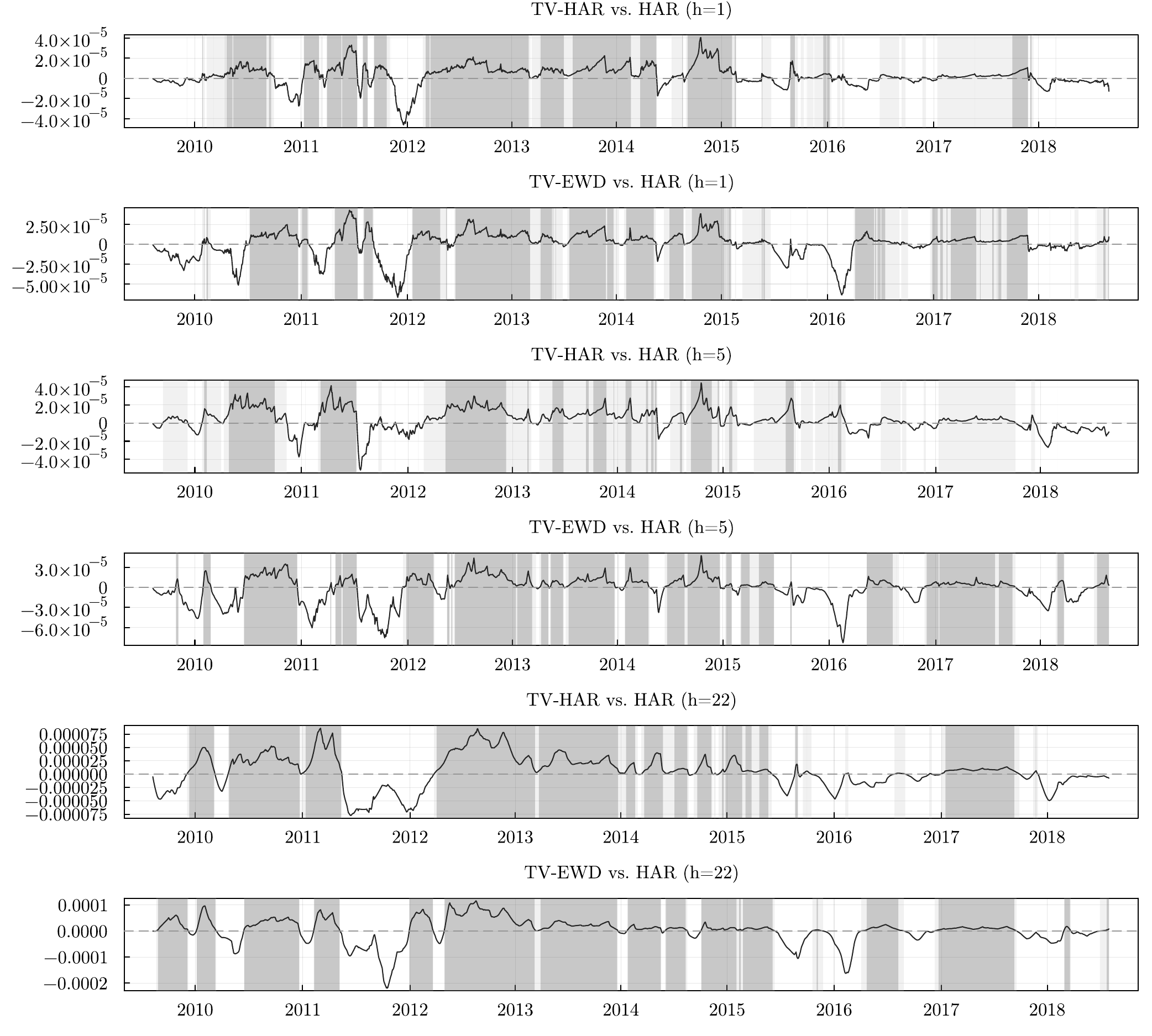}
            \end{center}
            \caption{\footnotesize{Local volatility prediction for Agilent Technologies Inc. The panels show one-sided kernel estimates of the fitted squared forecast error differential $\widehat{\text{SED}}^h_t$ from time-varying heterogeneous autoregressive (TV-HAR) and time-varying extended Wold decomposition (TV-EWD) models versus a heterogeneous autoregressive (HAR) model for $h\in\{1,5,22\}$. The shaded areas represent periods where $\widehat{\text{SED}}^h_t>0$, with the dark areas representing pockets that have less than a 5\% chance of being spurious and the light areas representing pockets that have more than a 5\% chance of being spurious. The sampling distributions used to determine spuriousness are from an AR residual bootstrap design.}}
            \label{fig:pocketsvolatility}
\end{figure} 

As the above results were obtained based on a large sample of 496 stocks, we provide the results in Table \ref{tab:RVs}, which reports the median estimates for all stocks. In addition, we provide box plots showing the errors for all stocks in Figure \ref{rmse_rv} (RMSE). \footnote{Note that the plots in Figure \ref{mae_rv} in Appendix \ref{sec:additional} show the absolute errors, which essentially provide the same information; therefore, we do not present them in the main text of the paper.} According to the results in Table \ref{tab:RVs}, the TV-EWD model outperforms all the other models for the different forecast horizons and samples. In particular, this result holds for the median of the errors computed for the 496 stocks in the sample, except for the RMSE for the first period, August 2009--August 2012, where the TV-HAR model consistently produces better forecasts. The results in Figure \ref{rmse_rv} (RMSE), which shows more detailed results for all stocks in box plots, confirm that the TV-EWD model produces much better forecasts than do all the other models for most of the stocks considered for all the horizons and samples considered.

Looking at the results in more detail, it is important to note that we are comparing several different approaches. First, a popular HAR model captures the unconditional persistent structures with 22 lags in the AR model, whereas the EWD model provides better results at longer horizons through the identification of more precise persistent structures. This result is consistent with the findings of \cite{ortu2020}, although they used only a single time series, and our results hold for various volatile stocks.

Second, and more importantly, adding temporal variations to AR models significantly improves the results, as time-varying parameters capture the dynamics in the data. In particular, the TV-HAR model provides significantly better forecasts compared with those of the other models. Finally, when the persistent structure is allowed to vary smoothly over time, as in our TV-EWD model, we note further improvements in the forecasts. The ability to appropriately incorporate changing persistent structures in the data into the model represents an advantage of the TV-EWD model, especially at longer horizons.

Interestingly, in the much quieter period from August 2016--August 2017, where we do not identify very heterogeneous persistent structures, the complex TV-EWD model performs similarly to the HAR and TV-HAR models in terms of the RMSE, although it still has the best results in terms of the MAE.

To quantify the benefits of localization more precisely, the pocket results in panel D of Table \ref{tab:RVs}, as well as the corresponding box plots in Figure \ref{rmse_rv} (RMSE), clearly show further increases in accuracy for the periods identified as pockets of predictability with our model. This result holds robustly for the 496 volatile stocks and thus provides evidence of the significant improvements achieved via the identification of pockets of predictability using a time-varying regression approach, particularly our decomposition strategy. 
Figure \ref{fig:pocketsvolatility} further illustrates the pockets of predictability for the company Agilent Technologies, Inc., identified by localizing the persistent structures via the decomposition strategy. 
While this serves as an example and similar patterns are observed for all the other stocks, this result demonstrates that the TV-EWD model identifies different periods than the TV-HAR model, which represents the best comparative model. Furthermore, these pockets increase the forecast accuracy through the identification of local persistent structures.

\section{Conclusion}
 \label{sec:conclusion}

We develop a novel framework for capturing smoothly evolving persistent structures in economic time series, allowing for the identification of rich, time-varying dynamics that are often overlooked by traditional models. By decomposing series into locally stationary components with heterogeneous persistence, our approach reveals structures that are otherwise hidden and provides a more nuanced understanding of shock propagation over time.

The model was applied to inflation and stock market volatility series and could identify distinct pockets of predictability, namely, periods in which persistent components dominate and forecasting accuracy increases. These findings demonstrate the practical value of our method for both empirical modeling and policy-relevant forecasting. More broadly, our results underscore the importance of accounting for time-varying persistence when analyzing economic and financial data.

\bibliography{ext_Wold}

\begin{thebibliography}{}

\bibitem[\protect\citeauthoryear{Baillie, Chung, and Tieslau}{Baillie
  et~al.}{1996}]{baillie1996analysing}
Baillie, R.~T., C.-F. Chung, and M.~A. Tieslau (1996).
\newblock Analysing inflation by the fractionally integrated arfima--garch
  model.
\newblock {\em Journal of applied econometrics\/}~{\em 11\/}(1), 23--40.

\bibitem[\protect\citeauthoryear{Balke and Wohar}{Balke and
  Wohar}{2002}]{balke2002low}
Balke, N.~S. and M.~E. Wohar (2002).
\newblock Low-frequency movements in stock prices: A state-space decomposition.
\newblock {\em Review of Economics and Statistics\/}~{\em 84\/}(4), 649--667.

\bibitem[\protect\citeauthoryear{Bandi, Perron, Tamoni, and Tebaldi}{Bandi
  et~al.}{2019}]{bandi2019}
Bandi, F., B.~Perron, A.~Tamoni, and C.~Tebaldi (2019).
\newblock The scale of predictability.
\newblock {\em Journal of Econometrics\/}~{\em 208\/}(1), 120--140.
\newblock Special Issue on Financial Engineering and Risk Management.

\bibitem[\protect\citeauthoryear{Bandi and Tamoni}{Bandi and
  Tamoni}{2023}]{bandi2017business}
Bandi, F. and A.~Tamoni (2023).
\newblock Business-cycle consumption risk and asset prices.
\newblock {\em Journal of Econometrics\/}~{\em 237\/}(2, Part C), 105447.

\bibitem[\protect\citeauthoryear{Bandi, Chaudhuri, Lo, and Tamoni}{Bandi
  et~al.}{2021}]{bandi2021}
Bandi, F.~M., S.~E. Chaudhuri, A.~W. Lo, and A.~Tamoni (2021).
\newblock Spectral factor models.
\newblock {\em Journal of Financial Economics\/}~{\em 142\/}(1), 214--238.

\bibitem[\protect\citeauthoryear{Bandi and Tamoni}{Bandi and
  Tamoni}{2022}]{bandi2022spectral}
Bandi, F.~M. and A.~Tamoni (2022).
\newblock Spectral financial econometrics.
\newblock {\em Econometric Theory\/}~{\em 38\/}(6), 1175--1220.

\bibitem[\protect\citeauthoryear{Bekierman and Manner}{Bekierman and
  Manner}{2018}]{bekierman2018forecasting}
Bekierman, J. and H.~Manner (2018).
\newblock Forecasting realized variance measures using time-varying coefficient
  models.
\newblock {\em International Journal of Forecasting\/}~{\em 34\/}(2), 276--287.

\bibitem[\protect\citeauthoryear{Chen, Gao, Li, and Silvapulle}{Chen
  et~al.}{2018}]{chen2018nonparametric}
Chen, X.~B., J.~Gao, D.~Li, and P.~Silvapulle (2018).
\newblock Nonparametric estimation and forecasting for time-varying coefficient
  realized volatility models.
\newblock {\em Journal of Business \& Economic Statistics\/}~{\em 36\/}(1),
  88--100.

\bibitem[\protect\citeauthoryear{Corsi}{Corsi}{2009}]{corsi2009}
Corsi, F. (2009).
\newblock A simple approximate long-memory model of realized volatility.
\newblock {\em Journal of Financial Econometrics\/}~{\em 7\/}(2), 174--196.

\bibitem[\protect\citeauthoryear{Dahlhaus}{Dahlhaus}{1996}]{dahlhaus1996}
Dahlhaus, R. (1996).
\newblock On the kullback-leibler information divergence of locally stationary
  processes.
\newblock {\em Stochastic processes and their applications\/}~{\em 62\/}(1),
  139--168.

\bibitem[\protect\citeauthoryear{Dahlhaus and Polonik}{Dahlhaus and
  Polonik}{2009}]{dahlhaus2009}
Dahlhaus, R. and W.~Polonik (2009).
\newblock Empirical spectral processes for locally stationary time series.
\newblock {\em Bernoulli\/}~{\em 15\/}(1), 1--39.

\bibitem[\protect\citeauthoryear{Dette and Wu}{Dette and
  Wu}{2022}]{dette2022prediction}
Dette, H. and W.~Wu (2022).
\newblock Prediction in locally stationary time series.
\newblock {\em Journal of Business \& Economic Statistics\/}~{\em 40\/}(1),
  370--381.

\bibitem[\protect\citeauthoryear{Dew-Becker and Giglio}{Dew-Becker and
  Giglio}{2016}]{dew2013asset}
Dew-Becker, I. and S.~Giglio (2016).
\newblock Asset pricing in the frequency domain: theory and empirics.
\newblock {\em Review of Financial Studies\/}~{\em 29\/}(8), 2029--2068.

\bibitem[\protect\citeauthoryear{Evans and Savin}{Evans and
  Savin}{1981}]{evans1981testing}
Evans, G. and N.~E. Savin (1981).
\newblock Testing for unit roots: 1.
\newblock {\em Econometrica: Journal of the Econometric Society\/}, 753--779.

\bibitem[\protect\citeauthoryear{Fan and Gijbels}{Fan and
  Gijbels}{1996}]{fan1996local}
Fan, J. and I.~Gijbels (1996).
\newblock {\em Local polynomial modelling and its applications: monographs on
  statistics and applied probability 66}, Volume~66.
\newblock CRC Press.

\bibitem[\protect\citeauthoryear{Farmer, Schmidt, and Timmermann}{Farmer
  et~al.}{2023}]{farmer2023pockets}
Farmer, L.~E., L.~Schmidt, and A.~Timmermann (2023).
\newblock Pockets of predictability.
\newblock {\em The Journal of Finance\/}~{\em 78\/}(3), 1279--1341.

\bibitem[\protect\citeauthoryear{Giglio, Maggiori, and Stroebel}{Giglio
  et~al.}{2015}]{giglio2015very}
Giglio, S., M.~Maggiori, and J.~Stroebel (2015).
\newblock Very long-run discount rates.
\newblock {\em The Quarterly Journal of Economics\/}~{\em 130\/}(1), 1--53.

\bibitem[\protect\citeauthoryear{Hamilton}{Hamilton}{2020}]{hamilton2020time}
Hamilton, J.~D. (2020).
\newblock {\em Time series analysis}.
\newblock Princeton university press.

\bibitem[\protect\citeauthoryear{H{\"a}rdle and Vieu}{H{\"a}rdle and
  Vieu}{1992}]{hardle1992kernel}
H{\"a}rdle, W. and P.~Vieu (1992).
\newblock Kernel regression smoothing of time series.
\newblock {\em Journal of Time Series Analysis\/}~{\em 13\/}(3), 209--232.

\bibitem[\protect\citeauthoryear{Hassler and Wolters}{Hassler and
  Wolters}{1995}]{hassler1995long}
Hassler, U. and J.~Wolters (1995).
\newblock Long memory in inflation rates: International evidence.
\newblock {\em Journal of Business \& Economic Statistics\/}~{\em 13\/}(1),
  37--45.

\bibitem[\protect\citeauthoryear{Justiniano and Primiceri}{Justiniano and
  Primiceri}{2008}]{justiniano2008time}
Justiniano, A. and G.~E. Primiceri (2008).
\newblock The time-varying volatility of macroeconomic fluctuations.
\newblock {\em American Economic Review\/}~{\em 98\/}(3), 604--641.

\bibitem[\protect\citeauthoryear{Lansing}{Lansing}{2009}]{lansing2009time}
Lansing, K.~J. (2009).
\newblock Time-varying us inflation dynamics and the new keynesian phillips
  curve.
\newblock {\em Review of Economic Dynamics\/}~{\em 12\/}(2), 304--326.

\bibitem[\protect\citeauthoryear{Lima and Xiao}{Lima and
  Xiao}{2007}]{lima2007shocks}
Lima, L.~R. and Z.~Xiao (2007).
\newblock Do shocks last forever? local persistency in economic time series.
\newblock {\em Journal of Macroeconomics\/}~{\em 29\/}(1), 103--122.

\bibitem[\protect\citeauthoryear{Nelson and Plosser}{Nelson and
  Plosser}{1982}]{nelson1982trends}
Nelson, C.~R. and C.~R. Plosser (1982).
\newblock Trends and random walks in macroeconmic time series: some evidence
  and implications.
\newblock {\em Journal of monetary economics\/}~{\em 10\/}(2), 139--162.

\bibitem[\protect\citeauthoryear{Neuhierl and Varneskov}{Neuhierl and
  Varneskov}{2021}]{neuhierl2021frequency}
Neuhierl, A. and R.~T. Varneskov (2021).
\newblock Frequency dependent risk.
\newblock {\em Journal of Financial Economics\/}~{\em 140\/}(2), 644--675.

\bibitem[\protect\citeauthoryear{Ortu, Severino, Tamoni, and Tebaldi}{Ortu
  et~al.}{2020}]{ortu2020}
Ortu, F., F.~Severino, A.~Tamoni, and C.~Tebaldi (2020).
\newblock A persistence-based wold-type decomposition for stationary time
  series.
\newblock {\em Quantitative Economics\/}~{\em 11\/}(1), 203--230.

\bibitem[\protect\citeauthoryear{Perron}{Perron}{1989}]{perron1989great}
Perron, P. (1989).
\newblock The great crash, the oil price shock, and the unit root hypothesis.
\newblock {\em Econometrica: journal of the Econometric Society\/}, 1361--1401.

\bibitem[\protect\citeauthoryear{Perron}{Perron}{1991}]{perron1991continuous}
Perron, P. (1991).
\newblock A continuous time approximation to the unstable first-order
  autoregressive process: The case without an intercept.
\newblock {\em Econometrica: Journal of the Econometric Society\/}, 211--236.

\bibitem[\protect\citeauthoryear{Primiceri}{Primiceri}{2005}]{primiceri2005time}
Primiceri, G.~E. (2005).
\newblock Time varying structural vector autoregressions and monetary policy.
\newblock {\em The Review of Economic Studies\/}~{\em 72\/}(3), 821--852.

\bibitem[\protect\citeauthoryear{St{\u{a}}ric{\u{a}} and
  Granger}{St{\u{a}}ric{\u{a}} and
  Granger}{2005}]{stuaricua2005nonstationarities}
St{\u{a}}ric{\u{a}}, C. and C.~Granger (2005).
\newblock Nonstationarities in stock returns.
\newblock {\em Review of economics and statistics\/}~{\em 87\/}(3), 503--522.

\bibitem[\protect\citeauthoryear{Stock and Watson}{Stock and
  Watson}{2007}]{stock2007has}
Stock, J.~H. and M.~W. Watson (2007).
\newblock Why has us inflation become harder to forecast?
\newblock {\em Journal of Money, Credit and banking\/}~{\em 39}, 3--33.

\bibitem[\protect\citeauthoryear{Stock and Watson}{Stock and
  Watson}{2017}]{stock2017}
Stock, J.~H. and M.~W. Watson (2017, may).
\newblock Twenty years of time series econometrics in ten pictures.
\newblock {\em Journal of Economic Perspectives\/}~{\em 31\/}(2), 59--86.

\bibitem[\protect\citeauthoryear{Wang, Pan, and Wu}{Wang
  et~al.}{2017}]{wang2017time}
Wang, Y., Z.~Pan, and C.~Wu (2017).
\newblock Time-varying parameter realized volatility models.
\newblock {\em Journal of Forecasting\/}~{\em 36\/}(5), 566--580.

\bibitem[\protect\citeauthoryear{Welch and Goyal}{Welch and
  Goyal}{2008}]{welch2008comprehensive}
Welch, I. and A.~Goyal (2008).
\newblock A comprehensive look at the empirical performance of equity premium
  prediction.
\newblock {\em The Review of Financial Studies\/}~{\em 21\/}(4), 1455--1508.

\bibitem[\protect\citeauthoryear{Wold}{Wold}{1938}]{wold1938}
Wold, H. (1938).
\newblock {\em A study in the analysis of stationary time series}.
\newblock Almquist \& Wiksells Boktryckeri.

\end{thebibliography}
\bibliographystyle{chicago}


\newpage

\appendix
\section{Appendix: Locally Stationary Processes}
\label{sec:app}

\begin{assumption}
\label{propLSP}
Locally stationary processes \cite{dahlhaus2009}: Let the sequence of stochastic processes $x_{t,T}$, ($t=1,\cdots,T$) be called a locally stationary process if $x_{t,T}$ has a representation
\begin{equation}
x_{t,T} = \sum_{h=-\infty}^{+\infty} \alpha_{t,T} \left(h\right) \epsilon_{t-h}
\end{equation}
satisfying the following conditions:
\begin{equation}
\sup_{t,T}=\vert  \alpha_{t,T} (h) \vert \le \frac{K}{l(h)},
\end{equation}
where $l(h)$ for some $\kappa>0$ is defined as:
\begin{equation}
  l(h):=
    \begin{cases}
      1 & \vert h\vert \le1\\
      \vert h\vert\log^{1+\kappa}\vert h\vert & \vert h\vert >1
    \end{cases}       
\end{equation}
and $K$ is not dependent on $T$. In addition, there exist functions $\alpha(\cdot,h):(0,1]\rightarrow \R $ with
\begin{equation}
\sup_{t=1,\ldots,T}=\left\vert  \alpha\left(\frac{t}{T},h\right) \right\vert \le \frac{K}{l(h)},
\end{equation}
\begin{equation}
\sup_h \sum^n_{t=1} \left\vert \alpha_{t,T} \left(h\right)  -  \alpha\left( \frac{t}{T},h \right) \right\vert \le K,
\end{equation}
\begin{equation}
V(\alpha(\cdot,h))\le \frac{K}{l(h)},
\end{equation}
where $V(\cdot)$ denotes the total variation in $[ 0,1]$, $\epsilon\sim$ iid, $\E\epsilon_t \equiv 0$, and $\E\epsilon_t^2 \equiv 1$. We also assume that all the moments of $\epsilon_t$ exist.
\end{assumption}

\newpage
\begin{assumption} 
\label{propLSP_beta}
Let the sequence of stochastic processes $x_{t,T}$, ($t=1,\cdots,T$) be called a locally stationary process if $x_{t,T}$ has a representation
\begin{equation}
x_{t,T}=\sum_{j=1}^{+\infty} \sum_{k=0}^{+\infty} \beta_{t,T}^{\{j\}}(k) \epsilon_{t-k2^j}^{\{j\}},
\end{equation}
satisfying the following conditions $\forall j$:
\begin{equation}
\sup_{t,T}=\vert  \beta_{t,T}^{\{j\}} (h) \vert \le \frac{K}{l(h)},
\end{equation}
where $l(h)$ for some $\kappa>0$ is defined as:
\begin{equation}
  l(h):=
    \begin{cases}
      1 & \vert h\vert \le1\\
      \vert h\vert\log^{1+\kappa}\vert h\vert & \vert h\vert >1
    \end{cases}       
\end{equation}
and $K$ is not dependent on $T$. In addition, there exist functions $\beta^{\{j\}}(\cdot,h):(0,1]\rightarrow \R $ with
\begin{equation}
\sup_{t=1,\ldots,T}=\left \vert  \beta^{\{j\}}\left(\frac{t}{T},h\right) \right\vert \le \frac{K}{l(h)},
\end{equation}
\begin{equation}
\sup_h \sum^n_{t=1} \left\vert \beta^{\{j\}}_{t,T} \left(h\right)  -  \beta^{\{j\}}\left( \frac{t}{T},h \right) \right\vert \le K,
\end{equation}
\begin{equation}
V(\alpha(\cdot,h))\le \frac{K}{l(h)},
\end{equation}
where $V(\cdot)$ denotes the total variation in $[ 0,1]$, $\epsilon\sim$ iid, $\E\epsilon_t \equiv 0$, and $\E\epsilon_t^2 \equiv 1$. We also assume that all the moments of $\epsilon_t$ exist.
\end{assumption}

\section{Appendix: Obtaining the Wold Representation for the TV-AR$(p)$ Model}
\label{sec:app_wold}

The Wold representation for a time series is found from an AR specification, through which a time series $x_t$ is expressed as
\begin{equation}
x_t = \sum_{j=0}^{\infty} \alpha_j \varepsilon_{t-j}
\end{equation}
where \( \varepsilon_t \) is white noise (i.i.d. with mean zero and variance \( \sigma^2 \)), and \( \alpha_j \) are the MA coefficients for \(j = 0, 1, 2, \ldots \), which, in principle, requires finding an infinite number of parameters in the data. With a finite number of observations in $(x_1,x_2,\ldots,x_T)$, we need to make additional assumptions about the nature of these parameters, such as that they can be expressed by rearranging the AR($p$) model
\begin{equation}
x_t=\phi_1x_{t-1}+\phi_2x_{t-2}+\ldots+\phi_px_{t-p}+\epsilon_t
\end{equation}
in lag operator notation, with $L$ being the lag operator such that $Lx_t=x_{t-1}$
\begin{equation}
x_t=\frac{1}{1-\phi_1L-\phi_2L^2-\ldots-\phi_pL^p}\epsilon_t
\end{equation}
This equation can be expanded as
\begin{equation}
x_t = (1 + \alpha_1 L + \alpha_2 L^2 + \alpha_3 L^3 + \cdots) \varepsilon_t
\end{equation}
where the \( \alpha \)-coefficients are MA coefficients representing the Wold decomposition.

To obtain the Wold representation, we compute the \( \alpha \) coefficients recursively. For an AR(1) model, these coefficients are simply powers of \( \phi \), i.e., $\alpha_j = \phi^j$. For higher-order AR models, recursion is used:
\[
\alpha_j = \sum_{k=1}^{\min(j, p)} \phi_k \alpha_{j-k} \quad \text{for } j \geq 1
\]
where $\alpha_0 = \sigma_{\epsilon_t}$. This recursive formula provides the MA coefficients up to any desired lag length $j$.

For the time-varying parameter autoregressive (TV-AR) process of order $p$, such as
\begin{equation}
x_t = \phi_{1,t} x_{t-1} + \phi_{2,t} x_{t-2} + \cdots + \phi_{p,t} x_{t-p} + \varepsilon_t
\end{equation}
where \( \{\phi_{k,t}\} \) are the time-varying autoregressive coefficients and \( \varepsilon_t \sim \text{i.i.d. } (0, \sigma^2) \) is white noise, we need to adjust the recursion to account for parameters that change over time so that the Wold representation will also change over time. This is possible by assuming localization at a given time point, as discussed in the main text of this paper:
\begin{equation}
x_t = \sum_{j=0}^{\infty} \alpha_{j,t} \varepsilon_{t-j}
\end{equation}
where \( \alpha_{j,t} \) are the time-varying MA coefficients, which depend on the values of \( \phi_{k,t} \) at each time step.

To obtain the time-varying Wold representation, we compute the \( \alpha \) coefficients recursively at each time $t$. That is, for each time \(t \), we set \( \alpha_{0,t} = \sigma_{\epsilon_t} \) (representing the effect of the current 
step). 
Then, the following terms are calculated:
   \[
   \alpha_{j,t} = \sum_{k=1}^{\min(j, p)} \phi_{k,t} \alpha_{j-k,t} \quad \text{for } j \geq 1
   \]
   With this recursion, \( \alpha_{j,t} \) can capture the time-varying dynamics of \( \phi_{k,t} \). In practice, the recursion is truncated at a maximum lag length \( J \) chosen according to the desired accuracy of the approximation.

\section{Appendix: Additional Plots and Tables}
\label{sec:additional}

\begin{figure}[!t]
            \begin{center}
                \includegraphics[scale=0.4]{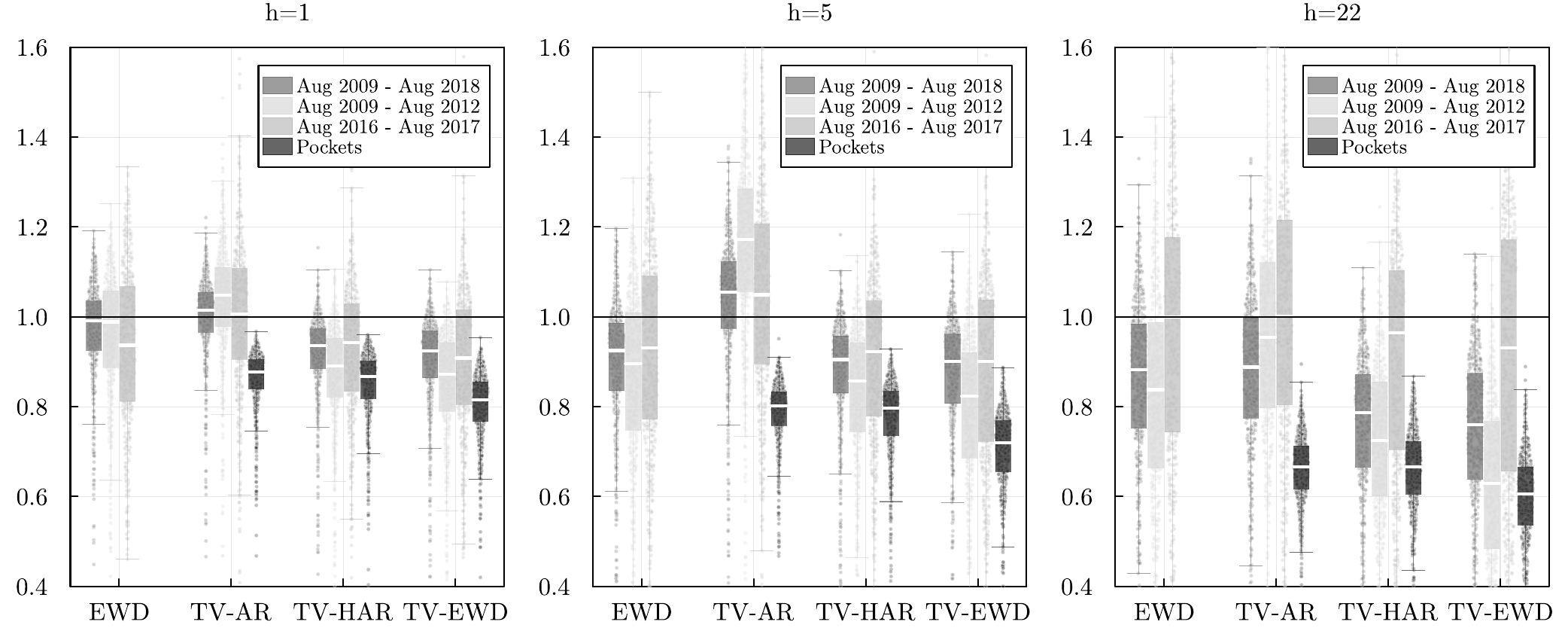}
            \end{center}
            \caption{\footnotesize{Mean absolute error of our TV-EWD model compared with the extended Wold decomposition (EWD), time-varying autoregression (TV-AR) and time-varying heterogeneous autoregression (TV-HAR) models. All the errors are relative to the HAR model of \cite{corsi2009} over $h=1$ (left), $h=5$ (middle) and $h=22$ (right). The box plots show the RMSE values for all 496 companies in the sample, computed based on the forecasts for the four periods represented by the four colors, including the periods identified as pockets of predictability. We use only pockets that have less than a 5\% change of being spurious. The sampling distribution used to determine spuriousness comes from an AR residual bootstrap design.}}
            \label{mae_rv}
\end{figure} 

\end{document}